\def\R{{\mathbf R}}
\def\Z{{\mathbf Z}}
\def\C{{\mathbf C}}
\newcommand{\x}{\mathbf{\chi}}
\newtheorem{theorem}{Theorem}[section]
\newtheorem{Definition}[theorem]{Definition}
\newtheorem{proposition}[theorem]{Proposition}
\newtheorem{remark}[theorem]{Remark}
\begin{document}


\title{On multistability behavior of Unstable Dissipative Systems}

\author{A.~Anzo-Hern\'andez}
\affiliation{{\textsc{ C\'atedras CONACYT - Benem\'erita Universidad Aut\'onoma de Puebla - Facultad de Ciencias F\'isico-Matem\'aticas},\\}
{Benemerita Universidad Aut\'onoma de Puebla},\\
\textsc{Avenida San Claudio y 18 Sur, Colonia San Manuel, 72570.\\
Puebla, Puebla, M\'exico.
\vskip 2ex}}

\author{H.E.~Gilardi-Vel\'azquez}%

\author{E. Campos-Cant\'on}
 \email{eric.campos@ipicyt.edu.mx}
\affiliation{
 \textsc{Divisi\'on de Matem\'aticas Aplicadas,\\}
 {Instituto Potosino de Investigaci\'on Cient\'{\i}fica y Tecnol\'ogica A.C.}\\
 \textsc{Camino a la Presa San Jos\'e 2055 col. Lomas 4a Secci\'on, 78216,\\
 San Luis Potos\'{\i}, SLP, M\'{e}xico \vskip 2ex}}

\begin{abstract}
We present a dissipative system with unstable dynamics called unstable dissipative system which are capable of generating a multi-stable behavior, \textit{i.e.}, depending on its initial condition the trajectory of the system converge to a specific basin of attraction. A piecewise linear (PWL) systems is generated based on unstable dissipative systems  (UDS) whose main attribute when they are switched is the generation of chaotic trajectories with multiple wings or scrolls. For this PWL system a structure is proposed  where both the linear part and the switching function depend on two parameters.  We show the range of values of such parameters where the PWL system presents a multistable behavior and  trajectories with multiscrolls.
\end{abstract}

\keywords{Multistability, piecewise linear systems, chaos, multi-scroll.}
\maketitle

\section{Introduction}

In the evolution of a complex system, there are several possible (coexisting) basins of attraction with a sink that traps the system trajectory depending on its initial state. This phenomenon is usually called multistability and appears in a wide variety complex systems \cite{Campos-Canton2010a, Angeli2007}. The correct interpretation of a sink depends on the complex system being studied. For instance, in the context of biology there are many examples of systems that manifest multistability phenomena. An  example worth mentioning is the cellular differentiation in order to understand human development and distinct forms of diseases. Here, multistability is understood as a processes in which a gene regulation network alternates along several possible types of  cell \cite{Ghaffarizadeh2014}. Another example comes from the nonlinear chemical dynamics, where multistability is understood as the different possible  final chemical states \cite{Sagues2003}. In this context, the archetype system is the Oregonator oscillator, where concentrations of the reacting species oscillate between two stable final states (bistability).  Several examples can be cited ranging from medicine \cite{Haddad2011}, electronic \cite{Patel2014}, visual perception \cite{Gershman2012}, superconducting \cite{Jung2014}, etc.  All of these examples motivate the current research works that address the challenger posed by R. Vilela Mendes in  \cite{Mendes2000} of -identifying the universal mechanism that leads to multistability and to prove rigorously under what circumstances the phenomenon  may occur.  

One feasible mode to address this challenge is through the formalism of dynamical systems where the concepts of basin of attraction, stability, convergence among others have a mathematical definition and also allow us to use some tools from stability theory to analyze its behavior.  It is worth to note that this situation is similar with the research works some year ago where chaotic behavior was modeled and interpreted from the point of view of dynamical systems. Since then, various dynamical systems with a chaotic behavior have been proposed (some examples are the Lorenz, Chua and  R{\"o}ssler systems, to name a few).

In the context of dynamical systems, an attractor is defined as a subset of the phase space toward which the trajectories of the dynamical systems converge to it. Attractors can be fixed points, limit cycles, quasiperiodic, chaotic or hyper-chaotic orbits. The basin of attraction is defined as the set of all the initial conditions in the phase space whose corresponding trajectories converge to an attractor \cite{Kengne2017,Giesl2007}. Concepts of convergent trajectories and attractor stability are usually associated with a energy-like term called Lyapunov function. Then, with the above concepts it can be said that a multistable dynamical system is a dynamical system that, depending on its initial condition, its trajectories can converge to two or more mutually exclusive Lyapunov stable attractors \cite{Haddad2011} .

Some formal definitions of multistable behavior have been proposed by D. Angeli in \cite{Angeli2007} and Q. Hui in \cite{Hui2014} for discontinuous dynamical systems. On the other hand, some methodologies to induce a multistable behavior by coupling two o more dynamical systems have been reported. For example, E. Jim\'enez-L\'opez \textit{et. al.} have generated multistable behavior by employing a pair of Unstable Dissipative Systems (UDS) of Type I, coupled in a master-slave configuration \cite{Jimenez2013}. It is worth to mention that an UDS is a Piecewise Linear System (PWL) which is classificated  in two types according to the eigenvalues of the linear operator. On the other hand, C.R. Hens \textit{et.al.} have shown that two coupled R{\"o}ssler oscillators can achieve a certain type of multistability called extreme, where the number of coexisting attractors is infinite \cite{Hens2012}.  It has also been observed that by an appropriate modification of the equations, some classical chaotic systems can exhibit also a multistable behavior. For example,  in \cite{Kengne2017,Kengne2016} J. Kengne \textit{et.al.} have proposed a system based on the Duffing-Holmes system and Chua's oscillator. They have shown that in a given range of its parameter values this system exhibit coexisting attractors. Additionally, hand, H.E. Gilardi-Vel\'azquez \textit{et. al.} introduce a multistable system generated with a Piecewise Linear (PWL) system based on the Jerk equation, in which the switching among the different phase-space regions is driven by means of the Nearest Integer or the round(x) function, such that the system display  infinite attractors along one dimension \cite{Gilardi-velazquez2017}. The experimental evidence of multistability for the R\"ossler oscillator have been reported by M. Patel \textit{et al} in \cite{Patel2014}. On the other hand, C. Li in \cite{Li2013} and D.Z.T. Njitacke \textit{et al} in \cite{Njitacke2016}, have been observed that multistable behavior is also presented in the Butterfly-Flow system and in the memristive diode bridge-based Jerk circuit, respectively. It is worth to mention that for some discrete-time chaotic systems, the multistable behavior is also displayed  \cite{Carvalho2001a,Astakhov2001}

In this paper we propose two methodologies to generate a multi stable behavior in a UDS of Type I and Type II. The first methodology consist in introduce a bifurcation parameter in the linear operator of the UDS of Type I. With such parameter, we can change the location of the stable and unstable manifold until the trajectories are trapped in a specific region of the space. In regard to our second methodology, we consider a UDS Type II and modify the switching law without changing the linear operator. With both methodologies we can design a priory the number of multistable regions by introducing another switching surfaces. 

This paper is organized as follows: in Section 2 we propose a definition of a multistable dynamical systems.  In section 3, we define and describe the main features of an UDS. Even we present in this section the conditions under which a dynamical system is an UDS Type I or Type II system. In section 4 we present in detail our proposed methodology to induce multistability in a UDS Type I system and in section 5 the corresponding methodology for UDS Type II. In section 6 we present some concluding remarks.

\section{Multistable dynamical system}

Let $\R^n$ be a metric space with Euclidean metric $d$. The dynamical system on the metric space $\R^n$  that we consider in this paper is an autonomous nonhomogeneous first order lineal ordinary differential equation system of the form:
\begin{equation}\label{eq:affine}
 \dot{\x} = f(\x)=A\x + g(\x), \quad \x (0) = \x_{o},
\end{equation}

\noindent where $\x \in \mathbf{R}^{n}$ is the state vector,  $A = \{a_{ij}\}^{n}_{i,j = 1} \in \mathbf{R}^{n \times n}$ is a non-singular linear operator with $a_{ij} \in \R$ and;  $g: \mathbf{R}^{n} \rightarrow \mathbf{R}^{n}$ is a piecewise constant vector which commutes as follows:
\begin{equation}\label{eq:gfunction}
g(\x) =  \left\{
\begin{array}{lll}
  B_{1}, & \text {if }    & \x \in S_{1} = \{\x \in \mathbf{R}^{n}:  G_{1}(\x) < \delta_{1} \}; \\
  B_{2}, & \text {if }    & \x \in S_{2}  = \{\x \in \mathbf{R}^{n}:  \delta_{1}  \leq G_{2}(\x) < \delta_{2} \}; \\
     \vdots    &       &  \vdots \\
   B_{m}, & \text {if }   & \x \in S_{m} = \{\x \in \mathbf{R}^{n}:  \delta_{m-1}  \leq G_{m}(\x)  \}; \\
\end{array} \right.
\end{equation}

\noindent where $B_{i} = (b_{i1},\ldots,b_{in})^T \in \R^{n}$, for $i=1,\ldots,m$ , are vectors with real entries; and $S = \{ S_{1}, S_{2}, \ldots, S_{m}\}$ is a finite partition  of the phase space, which satisfy $\R^{n}=\bigcup_{i=1}^ m S_i$ and $\bigcap_{i=1}^mS_i = \emptyset$.  Each domain $S_{i}$ is defined by surfaces $\Sigma_i$ in terms of $\delta_{i}$ (with $1\leq i \leq m-1$) that act as a separatrices (or boundaries) between two consecutive switching domain. In what follows we call each $\Sigma_i$ the switching surfaces. Furthermore, we assume that each set $S_{i}$ has a saddle equilibrium point $\x_i^{*}\in S_{i}$. If $\lambda_{j} = \alpha_{j} + i \beta_{j}$ is  either a complex eigenvalue $\beta_j\neq 0$ or real eigenvalue  $\beta_j =0$  of the linear operator $A$ and $\bar{\textit{v}_{j}} \in \mathbf{R}^{n}$ its corresponding eigenvector, then the stable set is $E^{s} = \textit{Span}\{ \bar{\textit{v}_{j}} \in \mathbf{R}^{n} : \alpha_{j} < 0 \}$ and the unstable set $E^{u} = \textit{Span}\{ \bar{\textit{v}_{j}} \in \mathbf{R}^{n} : \alpha_{j} > 0 \}$ \cite{Guzzo2010}.   

Let $\phi_{t}(\x_0) \in \mathbf{R}^{n}$ be the solution curve or trajectory of  \eqref{eq:affine} given the initial condition $\x_0$. Thus $\phi:\R^n\times\R\to \R^n$: 
\begin{Definition}\label{def1} 
A closed invariant set $\mathcal{A}\subseteq\R^n$ is called an attracting set of  \eqref{eq:affine} with flow $\phi_t$, if there exist a neighborhood  $U\subseteq\R^n$ of $\mathcal{A}$ with $\phi_t(U)\subseteq U$ and $\mathcal{A}\subseteq U$ such that  
$$\mathcal{A}=\bigcap_{t=0}^\infty \phi_t(U),$$
where $$\phi_t(U)=\{\phi_t(x)|x\in U \}.$$  An attractor  of  \eqref{eq:affine} is an attracting set which contains a dense orbit.
\end{Definition}

\begin{Definition}\label{defBasinAttra} The basin of attraction of  $\mathcal{A}$ is the set of initial conditions whose trajectories converge to the attractor, that is $U=\Omega(\mathcal{A}) = \{ \x_{0} \in \R^{n}: \phi_{t}(\x_{0}) \to \mathcal{A}  \quad \text{as} \quad  t \to \infty   \}$. 
\end{Definition}

Thus, an attractor is a closed invariant set $\mathcal{A}$ and there is an open neighborhood $U \supset \mathcal{A}$ such that the trajectory $\x(t)=\phi_{t}(\x_0)$ of any point $\x_0 \in U$ satisfies $d(\phi_{t}(\x_0),\mathcal{A})  \to 0$ as $t \to \infty$; where $d(\x,\mathcal{A}) =\text{ inf}\{d(\x,x_0)| \; \x\in\phi_t(\x_0) \;\text{and}\; x_{0} \in \mathcal{A} \}$.

There are different types of attractors, i.e., stable equilibrium point, limit cycle, a set generated by a chaotic trajectory. Based on the aforementioned definition, we will be considering the following definition of a  generalized multistable system throughout this work:

\begin{Definition}\label{def3}
We say that the dynamical system given by \eqref{eq:affine} is generalized multistable if there exist more than one basin of attraction, i. e. $\Omega(\mathcal{A}_1), \ldots, \Omega(\mathcal{A}_k)$, with $2\leq k\in \Z$.
\end{Definition}
In the context of generalized multistability, the coexistence of multiple attractors $\mathcal{A}_1,\ldots, \mathcal{A}_k$ makes the distance $d(\phi_{t}(\x_0),\mathcal{A}_i)$ takes different values that depends on the initial condition $\x_0$. For instance, the distance $d(\phi_{t}(\x_0),\mathcal{A}_i)=0$ if  $\x_0\in \Omega(\mathcal{A}_i)$, but $d(\phi_{t}(\x_0),\mathcal{A}_i)\neq0$ if   $\x_0\in \Omega(\mathcal{A}_j)$, with $i\neq j$.

\begin{remark}
It is important to characterize different types of multistability as follows:
\begin{enumerate}
\item[1.-] There exists a set $\{\x_{i}^{*}\}^{m}_{i=1}$ of saddle equilibrium points of  \eqref{eq:affine} in $\mathbf{R}^{n}$. The basin of attraction of each equilibria is given by the stables set $E^s$.  This type of multistable states is known as multistability.  
\item[2.-] Due to the phase space $\R^n$ is partitioned in a finite number $m\in\Z$ of domains $S_{i}$  and each equilibrium point is located at $\x^{*}_{i} = A^{-1}B_{i}\in S_i\subseteq\R^n$, for $i = 1,\ldots,m$. So the basins of attraction of  $\x^{*}_{i}$ is determined by the stable set $E_i^s$ restricted to $S_i$.
\item[3.-] When the trajectory does not converge to the equilibria, instead oscillates around them and exist at least two basin of attractors $\Omega_{i}=\Omega(\mathcal{A}_i)$ and $\Omega_{j}=\Omega(\mathcal{A}_j)$, this type of multistable states is known as generalized multistability. This is the target of this work.
\end{enumerate}
\end{remark}

\section{Chaotic attractors based on Unstable Dissipative Systems}\label{sec_UDS}

We consider the following family of affine linear systems:
\begin{equation}\label{eq:PWL}
\dot{\x} = A\x + B(\x),
\end{equation}

\noindent where $\x= (x_{1},x_{2},x_ {3})^{\top} \in \mathbf{R^{3}}$ is the state vector,  the real matrix $A  \in \mathbf{R}^{3 \times 3}$  is a non-singular linear operator; and $B: \mathbf{R}^{3} \rightarrow \mathbf{R}^{3}$ is  a piecewise constant vector which is controlled by  a step function  accoding to the domain $S_i$. For simplicity and without loss of generalization, we assume that \eqref{eq:PWL} is given by the jerk type equation \cite{Campos-Canton2012a}:
\begin{equation}\label{eq:uds}
   A = \left( \begin{array}{ccc}
          0 & 1 & 0 \\
          0 & 0 & 1 \\
         -\alpha & -\beta & -\gamma \end{array} \right), \quad
  B(\x) =  \left( \begin{array}{c}
           0 \\
           0 \\
            \sigma(\x)   
          \end{array} \right);
\end{equation} 

\noindent where $\alpha, \beta, \gamma \in \mathbf{R}$ and $\sigma(\x): \mathbf{R}^{3} \rightarrow \mathbf{R}$ is a step function  which is determined by a switching law to control the equilibria of the system, as follows:
\begin{equation}\label{eq:switching_law}
\sigma(\x) =  \left\{
\begin{array}{lll}
  b_{1}, & \text {if}    & \x \in S_{1} = \{\x \in \mathbf{R}^{3}:   \mathbf{v}^{\top}\x < \delta_{1} \}; \\
   b_{2}, & \text {if}    & \x \in S_{2}  = \{\x \in \mathbf{R}^{3}:  \delta_{1}  \leq \mathbf{v}^{\top}\x < \delta_{2} \}; \\
     \vdots    &       &  \vdots \\
   b_{m}, & \text {if}   & \x \in S_{m} = \{\x \in \mathbf{R}^{3}:  \delta_{m-1}  \leq \mathbf{v}^{\top} \x \};\\
\end{array} \right.
\end{equation}

\noindent with $b_{i} \in \mathbf{R}$ and $S_{i}$, $i = 1,\ldots,m$, generates a partition of the phase space, with $\mathbf{v} \in \mathbf{R}^{3}$ (with $\mathbf{v} \neq 0$)  a constant vector and $\delta_1 \leq \delta_2 \leq \cdots \leq \delta_{m-1} $ determine switching surfaces $\Sigma_j=\{ \x\in\R^3|\mathbf{v}^{\top}\x=\delta_j\}$, $j=1,\ldots,m-1$. In particular, we assume that switching surfaces $\Sigma_j$ are defined by using  $\mathbf{v} = [1,0,0]^{\top} \in \mathbf{R}^{3}$ and different values of $\delta_j$. The role of the switching function $\sigma$ is to specify which constant vector is active at a given domain $S_i$, that is, if $\sigma(\x) = b_{i}$ for $i \in I = \{1,\ldots,m\}$, then the affine linear system that governs the dynamics in the switching domain $S_{k}$ is given by $\dot{\x} = A\x + (0,0, b_{i})^{\top}$. 

Our case study is when each switching domain contains an single saddle equilibrium point  located at $\x_{i}^{*} = A^{-1}B_{i}$, with $i \in I$. The idea is to generate different basins of attraction $\Omega_j=\Omega(\mathcal{A}_j)$  such that for any initial condition $\x_{0} \in \Omega=\bigcup_{j=1}^ k \Omega_{j} \subset \mathbf{R}^{3}$, the trajectory $\phi(\chi_{0})$ of the system \eqref{eq:PWL}-\eqref{eq:uds}  converges at only one attractor $\mathcal{A}_{j}$. We are considering generalized multistability so the trajectory needs to remain oscillating chaotically. We start considering only one basin of attraction of a  multiscroll chaotic attractor. The mechanism of generation of multiscroll attractors based on this class of systems is due to the stable and unstable manifolds. For example, considering two domains $S_i$ and $S_{i+1}$, and the commutation surface $\Sigma_{i}$ between them. When the trajectory $\phi_t(\x_0)$, with  initial condition $\x_0\in S_i\cap \Omega$, reaches to the commutation surface $\Sigma_{i}$ and crosses to the region $S_{i+1}$, where it is again trapped in a new scroll with equilibrium point  $\x_{i+1}^{*} = A^{-1}B_{i+1}$. 
There are two important facts about the generation of multiscroll attractors, first that the scrolls are generated due to the complex conjugate eigenvalues with positive real part, so the scroll increasing their size due to the unstable manifold. Second, that the trajectory of the system oscillating around the equilibrium point $\x^{*}_{i}$ escapes from the domain $\mathcal{S}_{i}$. This occurs near the unstable manifold $E_i^{u}\subset \mathcal{S}_{i}$ where it crosses the commutation surface  and  it is attracted by the stable manifold $E_{i+1}^s\subset \mathcal{S}_{i+1}$ towards the equilibrium point  $\x^{*}_{i+1}$  in the domain $\mathcal{S}_{i+1}$. The process is repeated in the inverse way forming scrolls around each equilibrium point.
In this context, the system \eqref{eq:PWL}-\eqref{eq:uds} can display various multi-scroll attractors as a result of a combination of several unstable one-spiral trajectories \cite{Ontanon-Garcia2014}, where the switching between regions is governed by the switching function \eqref{eq:switching_law}.

In what follows, we assume that the eigenspectra $\Lambda = \{ \lambda_{1},\lambda_{2},\lambda_{3}\}$ of the linear operator $A \in \R^{3\times 3}$ has the following features: a) at least one eigenvalue is a real number  and; b) at least two eigenvalues are complex conjugate numbers. Furthermore, we consider that the sum of eigenvalues of $\Lambda$ is negative. A dynamical system defined by the linear part of the system \eqref{eq:PWL} that satisfy the above requirements is called an Unstable Dissipative System (UDS) \cite{Campos-Canton2010a}. 

\begin{Definition}\label{def_uds} Let $\Lambda = \{ \lambda_{1},\lambda_{2},\lambda_{3}\}$ be the eigenspectra of the lineal operator $A\in \mathbf{R^{3\times 3}}$, such that $\sum^{3}_{i=1} \lambda_{i}  < 0$, with $\lambda_{1}$ a real number and $\lambda_{2},\lambda_{3}$ two complex conjugate numbers. A system given by the linear part of the system \eqref{eq:PWL} is said to be a UDS \textit{Type I} if $\lambda_{1}<0$ and $ \textit{Re}\{\lambda_{2,3}\}>0$; and it is  \textit{Type II} if $\lambda_{1}>0$  and $ \textit{Re}\{\lambda_{2,3}\}<0$.
\end{Definition}

The above definition implies that the UDS \textit{Type I} is dissipative in one of its components but unstable in the other two, which are oscillatory. The converse is the UDS \textit{Type II}, which are dissipative and oscillatory in two of its components but unstable in the other one.  The following result ( based on the results  in \cite{Campos-Canton2012a}) provide conditions to guaranteed that the system \eqref{eq:PWL} is \textit{UDS} \textit{Type I} or \textit{Type II} for a general lineal operator $A = \{ \alpha_{ij} \} \in \R^{3}$, with $\alpha_{ij} \in \R$ for $i,j=1,2,3$.

\begin{proposition}\label{prop1}
 Consider the family of affine lineal systems \eqref{eq:PWL} with lineal operator $A$ given by \eqref{eq:uds} with $\alpha, \beta, \gamma \in \mathbf{R}$. Let $\{a,b,c\}$ be a set of non zero real numbers called control parameters. If $\alpha = c(a^{2} + b )$, $\beta = a^{2} + b + 2ac$ and $\gamma = c-2a $ with $b,c>0$ and $a<c/2$, then the system  \eqref{eq:PWL}-\eqref{eq:uds} is based on UDS \textit{Type I}; on the other hand, if $b>0$ and $a,c<0$, then the system is based on UDS \textit{Type II}.
\end{proposition}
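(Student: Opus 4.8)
The plan is to reduce the whole classification to one explicit eigenvalue computation, since Definition~\ref{def_uds} is phrased purely in terms of the sign of the real eigenvalue, the sign of the real part of the complex pair, and the sign of $\sum_{i}\lambda_{i}$. Because the operator $A$ in~\eqref{eq:uds} is a companion (jerk) matrix, its characteristic polynomial is read off immediately: expanding $\det(\lambda I - A)$ along the first column gives
\begin{equation}
p(\lambda) = \lambda^{3} + \gamma\lambda^{2} + \beta\lambda + \alpha ,
\end{equation}
so that, by the coefficient--root relations, $\sum_{i}\lambda_{i} = -\gamma$, the sum of pairwise products equals $\beta$, and $\prod_{i}\lambda_{i} = -\alpha$. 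Thus once $\alpha,\beta,\gamma$ are replaced by their expressions in $a,b,c$, the eigenspectrum $\Lambda$ is entirely determined, and the two cases of the proposition reduce to inspecting signs.

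The computational heart is the factorization. First I would substitute the parametrization into $p$ and show that it collapses to
\begin{equation}
p(\lambda) = (\lambda + c)\bigl((\lambda - a)^{2} + b\bigr),
\end{equation}
for instance by verifying directly that $\lambda = -c$ annihilates $p$ and then dividing out the factor $(\lambda+c)$ to expose the remaining quadratic $\lambda^{2} - 2a\lambda + (a^{2}+b)$. From this factored form the spectrum is explicit: one real eigenvalue $\lambda_{1} = -c$ and a complex pair $\lambda_{2,3} = a \pm i\sqrt{b}$, the latter being genuinely complex conjugate precisely because $b>0$. This already secures the two structural hypotheses behind Definition~\ref{def_uds}, namely the existence of one real and two complex-conjugate eigenvalues.

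With the closed form in hand the classification follows by inspection. Dissipativity is common to both regimes and reduces to the single inequality $\sum_{i}\lambda_{i} = 2a - c < 0$, i.e. $a < c/2$. For the Type~I case I would impose $c>0$ and $0<a<c/2$ together with $b>0$: then $\lambda_{1} = -c < 0$, $\mathrm{Re}\{\lambda_{2,3}\} = a > 0$, and $2a-c<0$, which is exactly the sign pattern of a UDS Type~I. For the Type~II case I would impose $a,c<0$ with $b>0$: then $\lambda_{1} = -c > 0$ and $\mathrm{Re}\{\lambda_{2,3}\} = a < 0$, giving the Type~II pattern. Matching these signs against Definition~\ref{def_uds} in each case then completes the argument.

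The step I expect to require the most care is the dissipativity clause, especially in the Type~II regime. In the Type~I case the inequality $a<c/2$ does double duty, simultaneously fixing the admissible range of $a$ and, since $c>0$, guaranteeing $\sum_{i}\lambda_{i} = 2a-c<0$. In the Type~II case, however, the sign conditions $a,c<0$ pin down only the signs of $\lambda_{1}$ and $\mathrm{Re}\{\lambda_{2,3}\}$; to secure dissipativity one still needs $2a-c<0$, equivalently $|a|>|c|/2$. I would therefore carry the constraint $a<c/2$ through the Type~II case as well, so that all three requirements of Definition~\ref{def_uds}, including $\sum_{i}\lambda_{i}<0$, are met. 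The factorization itself is routine once one guesses the real root $\lambda_{1}=-c$, which is precisely the root the parametrization is built to produce.
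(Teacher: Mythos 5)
Your proof is correct and follows essentially the same route as the paper's: substitute the parametrization into the characteristic polynomial $p(\lambda)=\lambda^{3}+\gamma\lambda^{2}+\beta\lambda+\alpha$, factor it as $(\lambda+c)\bigl(\lambda^{2}-2a\lambda+(a^{2}+b)\bigr)$, read off $\lambda_{1}=-c$ and $\lambda_{2,3}=a\pm i\sqrt{b}$, and match signs against Definition~\ref{def_uds}. Two points where you are more careful than the paper are worth keeping: (i) for Type~I you impose $0<a<c/2$, and indeed $a>0$ is needed to get $\mathrm{Re}\{\lambda_{2,3}\}>0$ but is absent from the proposition as stated; (ii) for Type~II you correctly carry the dissipativity constraint $2a-c<0$ (equivalently $a<c/2$, i.e.\ $|a|>|c|/2$), whereas the paper's proof asserts the eigenvalue sum ``is still negative since $a>c/2$,'' which is backwards---if $a>c/2$ the trace condition fails---and the paper's own Example~2 ($a=-0.3494$, $c=-0.0988$) in fact satisfies $a<c/2$. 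So your version is not only correct but quietly repairs genuine slips in the published statement and argument.
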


\begin{proof}
The characteristic polynomial of the lineal operator \eqref{eq:uds} is:
\[
\begin{array} {lcl} 
p(\lambda) & = & \lambda^{3} + \gamma  \lambda^{2} + \beta \lambda + \alpha, \\          
                  & = & \lambda^{3} + (c-2a) \lambda^{2} + (a^{2} + b + 2ac) \lambda + (ca^{2} + cb ) ,\\
                  & = & (\lambda + c)(\lambda^{2} - 2a\lambda + (a^{2} + b)) .
\end{array}\]

\noindent The roots of $p(\lambda)$ give the following expressions for the eigenspectra
$\Lambda = \{ \lambda_{1},\lambda_{2},\lambda_{3}\}$  of \eqref{eq:uds}: $\lambda_{1} = -c$ and $\lambda_{2,3} = a \pm i\sqrt{b}$. Note that $\lambda_{1}<0$ and  $\sum^{3}_{i=1}  \lambda_{i}  = -c + 2a< 0$ if $a<c/2$ and  $c>0$.  Then, according to Definition \eqref{def_uds} the system \eqref{eq:PWL}-\eqref{eq:uds} is UDS \textit{Type I}. On the other hand, if $a,c<0$, then $\lambda_{1}>0$ and the above summation is still negative since $a>c/2$, which implies that the system is  UDS \textit{Type II}.
\end{proof}
\begin{proposition}\label{prop2}
 Consider the family of affine lineal systems given by  \eqref{eq:PWL}, the lineal operator $A$ based on the jerk system \eqref{eq:uds} with $\alpha, \beta, \gamma \in \mathbf{R}$. If $\alpha > 0$, $0<\beta < \alpha / \gamma$ and $\gamma >0 $, then the system  \eqref{eq:PWL}-\eqref{eq:uds} is based on UDS \textit{Type I}.
\end{proposition}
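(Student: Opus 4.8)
The plan is to read off everything from the characteristic polynomial of the operator in \eqref{eq:uds}, namely $p(\lambda)=\lambda^{3}+\gamma\lambda^{2}+\beta\lambda+\alpha$, and to verify the three defining features of a UDS Type I in Definition \ref{def_uds}: negative trace, one negative real eigenvalue, and a complex conjugate pair with positive real part. The dissipativity comes for free from Vieta's formulas, since $\lambda_{1}+\lambda_{2}+\lambda_{3}=-\gamma<0$ as soon as $\gamma>0$.

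Next I would localize the roots. Because $\alpha,\beta,\gamma>0$, each term of $p(\lambda)$ is nonnegative for $\lambda\geq 0$ and $p(0)=\alpha>0$, so $p(\lambda)>0$ on $[0,\infty)$ and every real root is strictly negative. A real cubic has either one or three real roots, so the crux is to exclude the three-real-root case, and this is precisely where the hypothesis $\beta<\alpha/\gamma$ must be used. I would argue by contradiction: if $p$ had three (necessarily negative) real roots $-p,-q,-r$ with $p,q,r>0$, then Vieta gives $\gamma=p+q+r$, $\beta=pq+qr+rp$, $\alpha=pqr$, and AM--GM yields $(pq+qr+rp)(p+q+r)\geq 9\,pqr$, i.e. $\beta\gamma\geq 9\alpha$. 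This contradicts $\beta\gamma<\alpha$ (equivalent to $\beta<\alpha/\gamma$). Hence $p$ has exactly one real root $\lambda_{1}<0$ together with a genuine complex conjugate pair $\lambda_{2,3}=\mu\pm i\nu$, $\nu\neq 0$.

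It then remains to fix the sign of $\mu=\textit{Re}\{\lambda_{2,3}\}$. The clean device is to evaluate $p$ at $-\gamma$: a direct computation gives $p(-\gamma)=\alpha-\beta\gamma>0$, again by $\beta<\alpha/\gamma$. Factoring $p(\lambda)=(\lambda-\lambda_{1})\big((\lambda-\mu)^{2}+\nu^{2}\big)$ with the quadratic factor strictly positive (since $\nu\neq 0$), the sign of $p$ equals that of $\lambda-\lambda_{1}$, so $p(-\gamma)>0$ forces $\lambda_{1}<-\gamma$. Combined with $\lambda_{1}+2\mu=-\gamma$ this gives $2\mu=-\gamma-\lambda_{1}>0$, hence $\mu>0$. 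Together with $\lambda_{1}<0$ and $\lambda_{1}+\lambda_{2}+\lambda_{3}=-\gamma<0$, the hypotheses of Definition \ref{def_uds} for a UDS Type I are met.

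I expect the genuine obstacle to be the exclusion of three real roots: the sign conditions $\alpha,\gamma>0$ alone do not force a complex pair, and it is the AM--GM estimate $\beta\gamma\geq 9\alpha$ that makes $\beta<\alpha/\gamma$ indispensable. The very same inequality then reappears through $p(-\gamma)=\alpha-\beta\gamma>0$ to pin down the sign of the real part, so once these two sign computations are in place the remaining steps are purely formal.
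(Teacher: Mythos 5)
Your proof is correct, and it takes a genuinely different route from the paper's. The paper argues via two classical criteria cited as black boxes: since $\beta\gamma<\alpha$ violates the Routh--Hurwitz condition $\gamma\beta>\alpha$ for the cubic $p(\lambda)=\lambda^{3}+\gamma\lambda^{2}+\beta\lambda+\alpha$, the equilibrium must be unstable; and since all coefficients are positive, Descartes' rule of signs forbids nonnegative real roots, so the instability can only come from a complex conjugate pair with positive real part, leaving one negative real root. You replace both citations with elementary, self-contained computations: the three-real-root configuration is excluded by Vieta plus AM--GM (it would force $\beta\gamma\geq 9\alpha$, contradicting $\beta\gamma<\alpha$), and the sign of the real part is pinned down by the evaluation $p(-\gamma)=\alpha-\beta\gamma>0$ together with the factorization $p(\lambda)=(\lambda-\lambda_{1})\bigl((\lambda-\mu)^{2}+\nu^{2}\bigr)$ and the trace identity $\lambda_{1}+2\mu=-\gamma$. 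What the paper's route buys is brevity and contact with standard stability theory; what yours buys is independence from those criteria, plus it makes explicit a step the paper leaves implicit (a purely imaginary pair would require $\beta\gamma=\alpha$, so the strict inequality rules it out --- your $\mu>0$ is strict by construction), and it yields extra quantitative information: $\lambda_{1}<-\gamma$, and the existence of the complex pair already under the weaker condition $\beta\gamma<9\alpha$.
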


\begin{proof} Suppose $\alpha, \gamma >0 $. Since, by definition, $-\gamma = Trace(A)=\sum_{i=1}^3 \lambda_i<0$, system \eqref{eq:PWL} is dissipative. Additionally,
with $\alpha=det(A)$ the  system \eqref{eq:PWL} has a saddle equilibrium,
which is determined by the characteristic polynomial of the lineal operator \eqref{eq:uds} is:
$$p(\lambda) =  \lambda^{3} + \gamma  \lambda^{2} + \beta \lambda + \alpha, $$

\noindent which for  $\beta < \alpha / \gamma$, according with Hurwitz polynomial criterion, implies unstability. Due to $\alpha$, $\beta$ and $\gamma$ are positive and according to Descartes' rule of signs  the characteristic polinomial has no positive roots, so it  has only one negative root due to the equilibrium point is a saddle. Then the eigenspectra is given by one negative real eigenvalue and a pair of complex conjugate eigenvalues with positive real part.   
\end{proof}
Only one  UDS type I  generates an unstable spiral around the equilibrium point $\x^*$ such that a trajectory $\phi_t(\x_0)$, $\x_0\in \R^3- E^s$,crosses many times a Poincar\'e plane $\Pi$  if  $\Pi$ is deined at the equilibrium point $\x^*$ and perpendicular to the unstable manifold $E^u \perp \Pi$. Let $\{\x_{i}^{*}\}^{m}_{i=1}$ be a set of  saddle equilibria of the PWL system \eqref{eq:affine} based on UDS Type I
\begin{figure}[t]
\centering
\hspace{-25pt}\includegraphics[width=14cm,height=8cm]{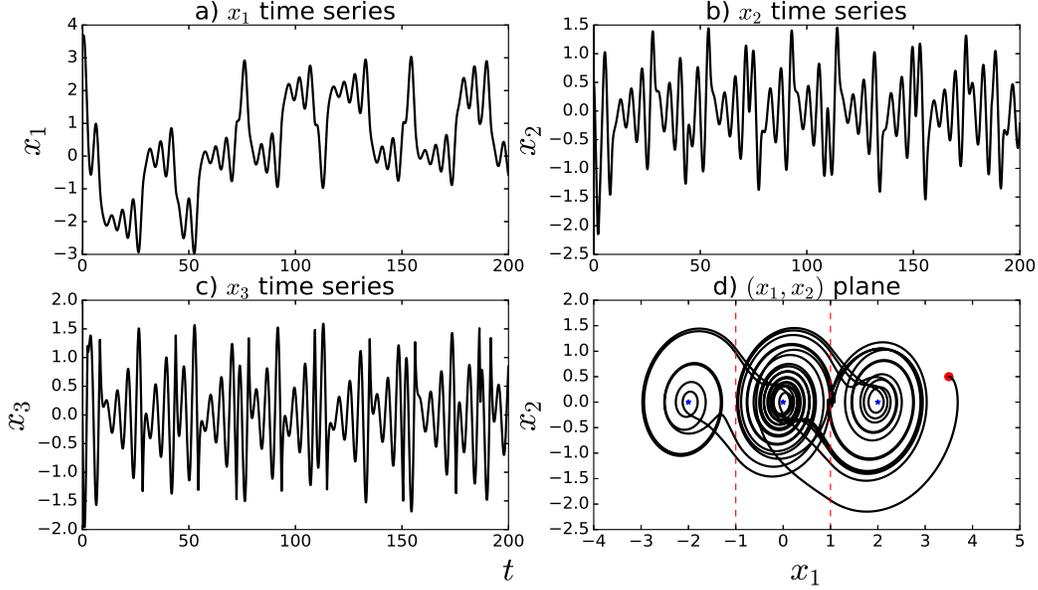}
\caption{a) $x_1$ time series. b) $x_2$ time series. c) $x_3$ time series. d) Projection of the attractor onto the ($x_{1},x_{2}$) plane based on UDS \textit{Type I} with control parameters $a = 0.12501$, $b = 1.5625$ and  $c = 1.25$; and switching law \eqref{eq:switching_law_6} . The dashed lines mark the division between the switching surfaces and the red dot indicates the initial position at $\chi_{0} = (3.5, 0.5, 0)^{\top}$. }
\label{fig:Fig1}
\end{figure}

\begin{figure}[t]
\centering
\hspace{-25pt}\includegraphics[width=14cm,height=8cm]{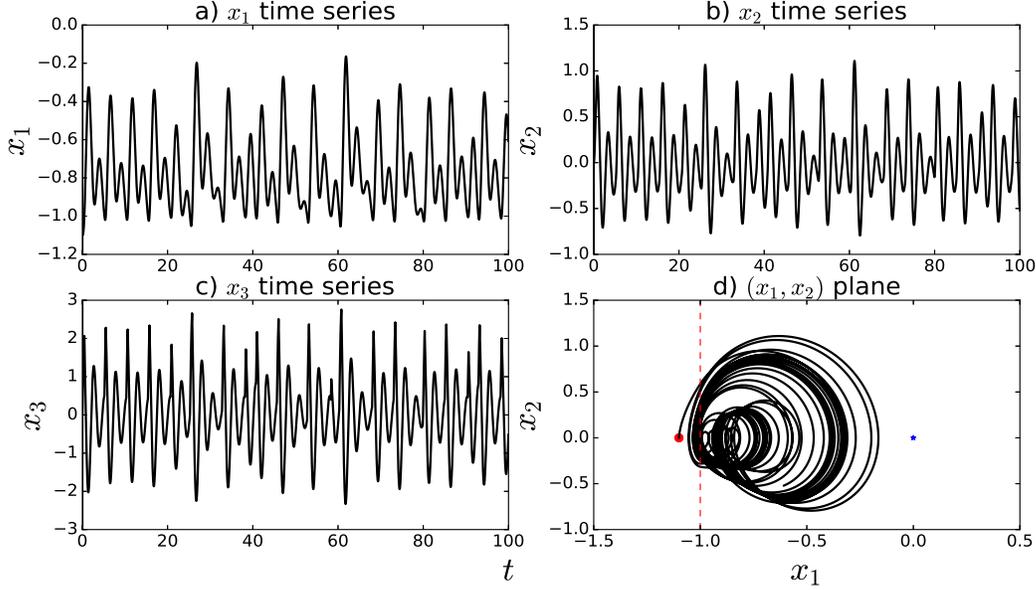}
\caption{a) $x_1$ time series. b) $x_2$ time series. c) $x_3$ time series. d)Projection of the attractor onto the ($x_{1},x_{2}$) plane based on UDS \textit{Type II}, with control parameters $a = -0.3494$, $b = 5.9469$ and  $c = -0.0988$; and switching law \eqref{eq:switching_law_6} . The dashed lines mark the division between the switching surfaces and the red dot indicates the initial position at $\chi_{0} = (-1.1, 0, 0)^{\top}$. }
\label{fig:Fig2}
\end{figure}

\begin{Definition}\label{def_multiscroll} 
Let $\{\x_{i}^{*}\}^{m}_{i=1}$ be a set of  equilibria of a PWL system \eqref{eq:affine} based on UDS Type I that generates a chaotic attractor $\mathcal{A}$. We say that the system \eqref{eq:affine} generates a multiscroll chaotic attractor with the minimum of saddle equilibria, if the chaotic trajectory $\phi_t(\x_0)\subset S_i$, $\x_0\in \Omega(\mathcal{A})$, crosses each Poincar\'e plane $\Pi_i$ defined at  $\x_{i}^{*}$ and $\Pi_i\perp E_i^u$ more than once  before leaving the domain $S_i$ and goes to domains $S_{i-1}$ or $S_{i+1}$. 
\end{Definition}
\noindent {\bf Example 1:} 
In order to illustrate the approach, we consider the  dynamical system defined by \eqref{eq:PWL} with control parameters $a = 0.12501$, $b = 1.5625$ and  $c = 1.25$. Then, according to the Proposition \eqref{prop1}, the last row of the lineal operator \eqref{eq:uds} is defined by the following elements: $\alpha = 1.9727$, $\beta = 1.2656$ and $\gamma = 1$. With this selection of control parameters, the eigenvalues of $A$ are  $\lambda_{1} = -1.25$ and $\lambda_{2,3} = 0.125  \pm \textit{i} 1.25 $, which according to Definition \ref{def_uds}, the system is an UDS \textit{Type I} . We define the switching law as: 

\begin{equation}\label{eq:switching_law_3}
\sigma(\x) =  \left\{
\begin{array}{lll}
   2,  & \text{if}  & \x \in S_{1} = \{\x \in \mathbf{R}^{3}: x_{1} > 1\}; \\
   0,  & \text{if}  & \x \in S_{2} = \{\x \in \mathbf{R}^{3}: -1 < x_{1} \leq 1\};  \\
  -2, & \text{if}  & \x \in S_{3} = \{\x \in \mathbf{R}^{3}:  x_{1} \leq -1\}.   \\
\end{array} \right.
\end{equation} 

Then, the equilibria for this system are located at $\chi^{*}_{1} =  (2,0,0)^T$, $\chi^{*}_{2} =  (0,0,0)^T$ and $\chi^{*}_{3} =  (-2,0,0)^T$. Figure \eqref{fig:Fig1} shows the time series of the state variables a) $x_{1}$, b) $x_{2}$, c) $x_{3}$ and d) the projection of the attractor based on UDS onto the $(x_1,x_2)$ plane, where we use the switching law \eqref{eq:switching_law_3} and initial condition $\chi_{0} = (3.5, 0.5, 0)^{\top}$. It is worth to note that Definition \eqref{def_multiscroll} is satisfied.

\noindent {\bf Example 2:} 
As a second example, we consider the following set of parameters $\{ a = -0.3494, b = 5.9469, c = -0.0988 \}$. Then, according to the results of the Proposition \eqref{prop1}, $\alpha = -0.6$, $\beta=6$ and $\gamma = 0.6$. With this selection of parameters the eigenvalues of $A$ are  $\lambda_{1} = 0.0989$ and $\lambda_{2,3} = -0.3494  \pm  2.4386 \textit{i}$, which according to Definition \ref{def_uds}, the system is an UDS \textit{Type II} . In particular, for this second example we  define the following switching law:
 
\begin{equation}\label{eq:switching_law_3_prima}
\sigma^{*}(\x) =  \left\{
\begin{array}{lll}
    0,  & \text{if}  & \x \in S_{1} = \{\x \in \mathbf{R}^{3}: -1 \leq x_{1}  \};  \\
    7,  & \text{if}  & \x \in S_{2} = \{\x \in \mathbf{R}^{3}:  x_{1} <  -1 \}.   \\\end{array} \right .
\end{equation} 
Then, the equilibria for this system are located at $\chi^{*}_{1} =  (0,0,0)^T$ and $\chi^{*}_{2} =  (-11.6667,0,0)^T$ . The unstable manifolds $E_1^u$ and $E^u_2$ lead the trajectory $\phi_t(\x_0)$ toward the switching surface $\Sigma_1=\{\x\in\R^3|x_1=-1\}$ and $\x_0\in\Omega$. The basin of attraction $\Omega$ is between the stable manifols $E^s_1$ and $E^s_2$. In Figure \eqref{fig:Fig2} we illustrate the  dynamics of a switching system based on UDS-\textit{Type II} for the initial conditions $\chi_{0} = (-1.1, 0, 0)^{\top}$.

\section{Emerging multistability in a multiscroll attractor based on UDS \textit{Type I}}\label{sec_multistabilityUDSI}

Based on the previous description of a UDS, in this section we consider the PWL system given by  \eqref{eq:PWL}-\eqref{eq:uds}. The goal is to introduce a bifurcation parameter $k$ to this kind of systems in order to go from monostability to multistability. The modification to the system needs to satifiy the folowing requierement: The equilibria of the system  need not depend on either the parameter $k$ or parameters $\alpha$, $\beta$ and $\gamma$ of the linear operator $A$.  This lets the parameter $k$ control the manifolds $E^s$ and $E^u$ in each $S_i$ in order to trap the trajectory in only a single-scroll attractor. The candidate to be our linear operator is given in the following way:
$$ A = \left( \begin{array}{ccc}
          0 & 1 & 0 \\
          0 & 0 & 1 \\
         -k \alpha & -k \beta & -k \gamma \end{array} \right). $$ 
We have two problems, the forme is that the paramete $k$ can modify the dissipativity of the system given by $-k\gamma$, and the second is that the equilibria  are given  by $\x^*=(\sigma(\x)/k \alpha,0,0)^\top$. The dissipativity of the system is kept if $-k \gamma$ is arbitrarily set to $-1$.
We require to preserve the location of the equilibria of the PWL system so we need to multiply  $\sigma(\x)$ by $k\alpha$. Thus the linear operator $A$ and vector $B$ of the PWL system  \eqref{eq:PWL}   are given as follows:

\begin{equation}\label{eq:uds_type1}
   A = \left( \begin{array}{ccc}
          0 & 1 & 0 \\
          0 & 0 & 1 \\
         -k\alpha & -k\beta & -1 \end{array} \right), \quad
  B(\x) =  \left( \begin{array}{c}
           0 \\
           0 \\
           k\alpha \sigma(\x)   
          \end{array} \right);
\end{equation} 

\noindent where $k \in \mathbf{R}^{+}$, $\alpha = 1.9727$ and $\beta = 1.2656$.The $\sigma(\x)$ function is given by \eqref{eq:switching_law_3}.
For $k=1$, we have the particular case given in the example 1 of the Section \ref{sec_UDS}, the system given by \eqref{eq:PWL} and \eqref{eq:uds_type1} satisfies the requirements of Definition \eqref{def_uds}. In Figure \eqref{fig:Fig1} d) we can observe a triple-scroll attractor  $\mathcal{A}$. There is only one basin of attraction $\Omega(\mathcal{A})\subset \mathbf{R^{3}}$ that contains a chaotic attractor.  In this section we use the parameter $k$  as a bifurcation parameter whose role is to modify the location of the stable $E^{s}$ and unstable $E^{u}$ manifolds. In this sense, $k$ change the dynamical behavior of the  UDS \textit{Type I} system from mono-stable to multi-stable, {\it i.e.}, from multiscrol attractorl to three different single-scroll attractors. It is worth to  note that by changing $k$, the switching surfaces and the equilibria remain unchanged.  If we increase the value of  $k$ parameter, the manifold directions change in such a way that for a given initial condition, the trajectory can not display a triple-scrolls attractor as before.

\begin{figure}[t]
\centering
\hspace{-25pt}\includegraphics[width=14cm,height=8cm]{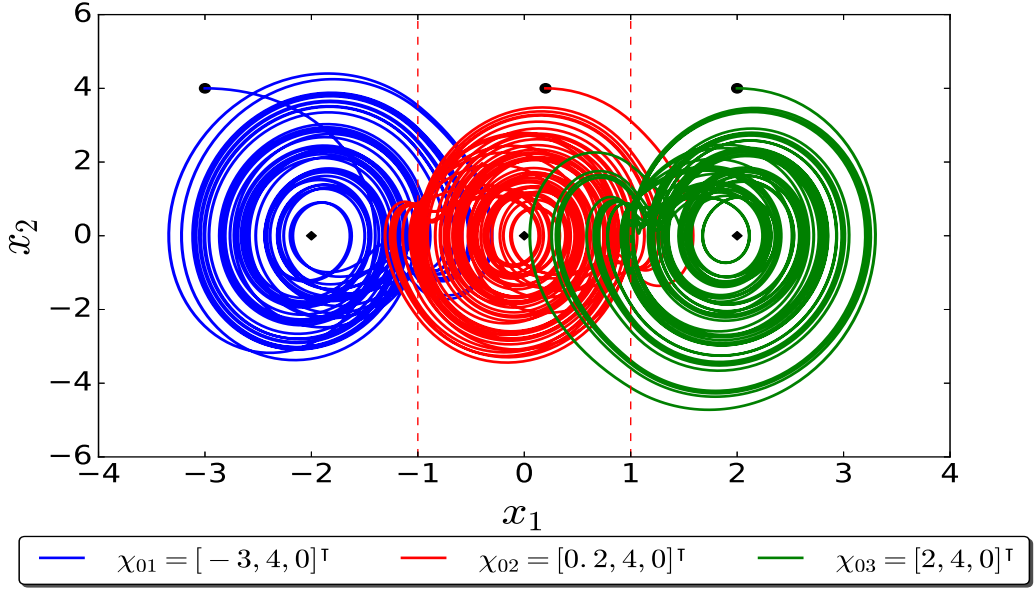}
\caption{Projections of three different attractors onto the ($x_{1},x_{2}$) plane based on UDS \textit{Type I}, with $\alpha = 1.9727$ and $\beta = 1.2656$, switching law \eqref{eq:switching_law_3} and $k = 7$ . The black dots indicates the initial conditions.} 
\label{fig:Fig3}
\end{figure}

\noindent {\bf Example 3:} 
For $k=7$  we have $\hat\alpha=k\alpha= 13.8089$, $\hat\beta=k\beta=8.8592$ and $\hat \gamma=k\gamma=1$. These new parameters $\hat\alpha$, $\hat \beta$ and $\hat\gamma$ satifies proposition \ref{prop2}, so the PWL system is based on UDS type I. Now, the trajectory is trapped in a single-scroll attractor, insead of a multiscroll attractor. There are three basins of attractions $\Omega_i$, with $i=1,2,3$. The attrators generated in a multistable state is shown in Figure  \eqref{fig:Fig3}, where  the following three distinct initial condition were used: 
 $\chi_{01} = (-3, 4, 0)^{\top} \in \Omega_{3} $, $\chi_{02} = (0.2, 4, 0)^{\top} \in \Omega_{2}$ and $\chi_{03} = (2,4, 0)^{\top} \in \Omega_{1}$. 
 
\begin{figure}[t]
\centering
\hspace{-25pt}\includegraphics[width=14cm,height=8cm]{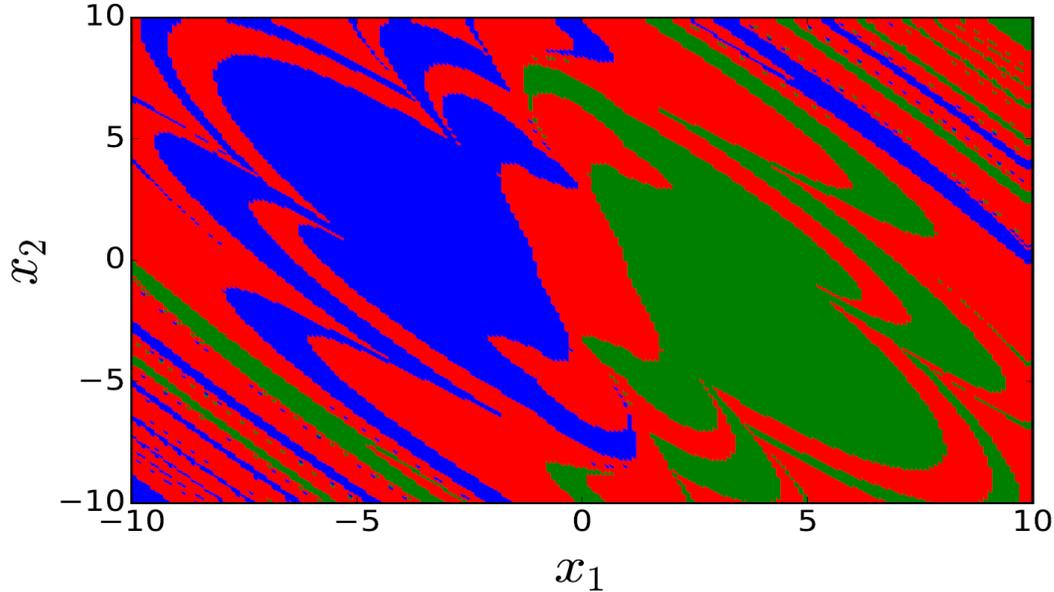}
\caption{Three basins of attraction $\Omega(\mathcal{A}_{i})$ (for $i=1,2,3$) generated with the switching law \eqref{eq:switching_law_3} in $x_{1} \in [-10,10]$, $x_{2} \in [-10,10]$ and $x_{3}=0$. Green dots are used for initial conditions that becomes trapped in $S_{1}$, red dots for $S_{2}$ and blue dots for $S_{3}$.} 
\label{fig:Fig4}
\end{figure}

Next, we vary the initial condition of the dynamical system defined by \eqref{eq:PWL}-\eqref{eq:uds_type1} based on  UDS \textit{Type I} on the $(x_{1},x_{2}$) plane in order to identify  shapes of different basins of attraction $\Omega_i$ of each attractor $\mathcal{A}_{i}$ (for $i=1,2,3$) . In Figure \eqref{fig:Fig4} we show a section of three basins of attractions when the states $x_{1}$ and $x_2$ are varied from $-10$ to $10$  and $x_{3}=0$.  

A green dot means that for such initial condition, the system is trapped in the attractor $\mathcal{A}_{1}$. In similar way, the red dot correspond to the basin of attraction $\mathcal{A}_{2}$ and blue dots to $\mathcal{A}_{3}$.

\begin{figure}
\centering
\hspace{-25pt}\includegraphics[width=14cm,height=8cm]{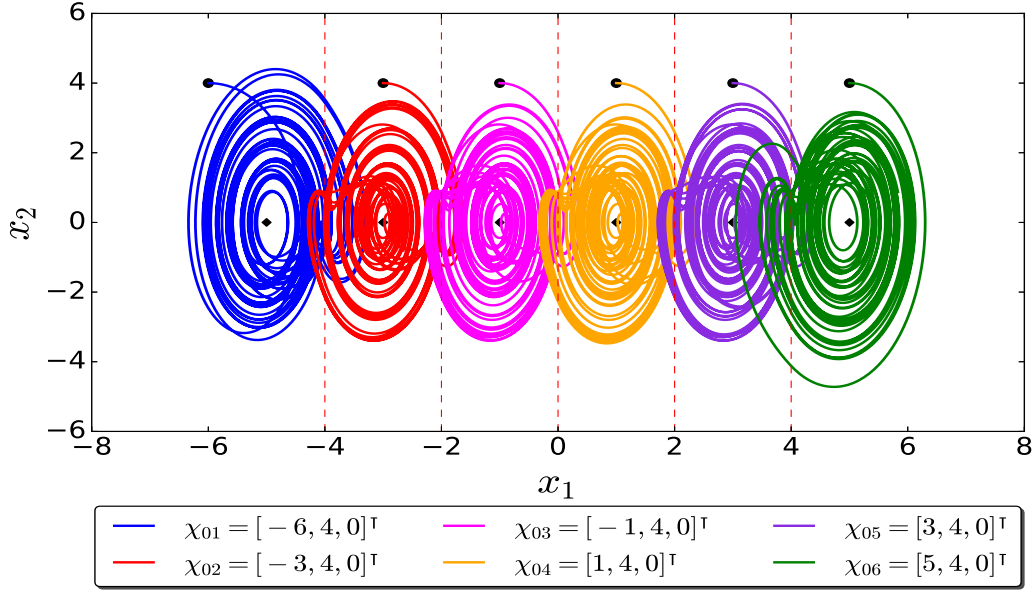}
\caption{Projections of different attractors based on UDS \textit{Type I} into the ($x_{1},x_{2}$) plane with  $\alpha = 1.9727$ and $\beta = 1.2656$, switching law \eqref{eq:switching_law_6} and $k = 7$ . The black dots indicates the initial conditions.} 
\label{fig:Fig5}
\end{figure}
\begin{figure}[t]
\centering
\hspace{-25pt}\includegraphics[width=14cm,height=8cm]{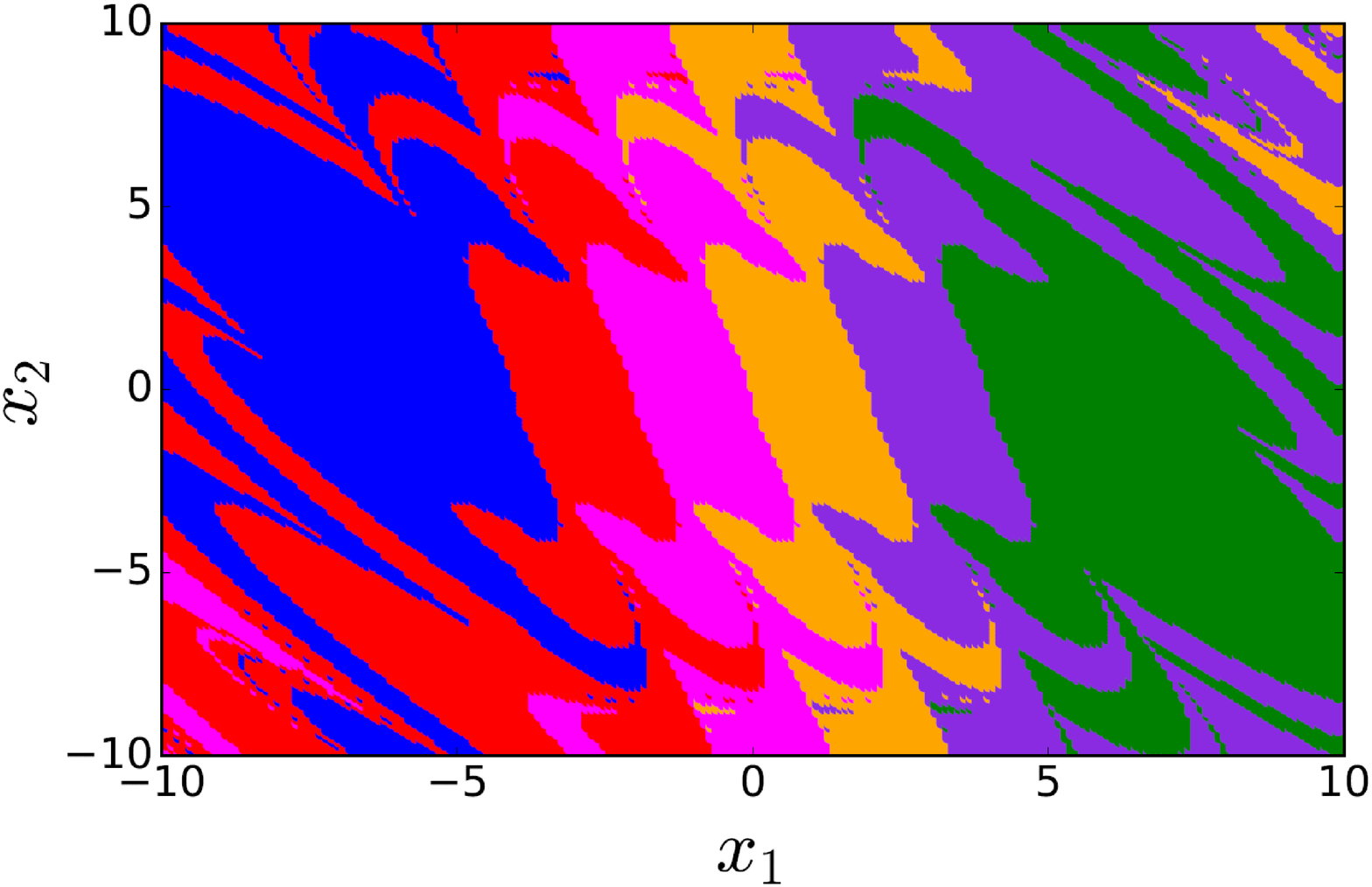}
\caption{Six basins of attraction $\Omega_{i}$, for $i=1,\ldots,6$ by means of using the function \eqref{eq:switching_law_6} and $x_{1} \in [-10,10]$, $x_{2} \in [-10,10]$ and $x_{3}=0$. } 
\label{fig:Fig6}
\end{figure}

\noindent {\bf Example 4:} 
It is worth to mention that it is possible to extend the number of final states of the UDS \textit{Type 1} by increasing the number of  switching domains of the systems. The key idea is to design appropriately the switching law by introducing more domains \cite{Ontanon-Garcia2014}. In order to illustrate how to increase the number of final stable states we change the switching law \eqref{eq:switching_law_3} as follows:
\begin{equation}\label{eq:switching_law_6}
\sigma(\x) =  \left\{
\begin{array}{lll}
   -5,  & \text{if}  & \x \in \hat{S}_{1} = \{\x \in \mathbf{R}^{3}: x_{1}  \leq -4\}; \\
   -3,  & \text{if}  & \x \in \hat{S}_{2} = \{\x \in \mathbf{R}^{3}: -4 < x_{1} \leq -2 \};  \\
   -1,  & \text{if}  & \x \in \hat{S}_{3} = \{\x \in \mathbf{R}^{3}: -2 < x_{1} \leq  0 \};   \\
    1,  & \text{if}  & \x \in \hat{S}_{4} = \{\x \in \mathbf{R}^{3}:  0 <  x_{1} \leq 2 \};   \\
    3,  & \text{if}  & \x \in \hat{S}_{5} = \{\x \in \mathbf{R}^{3}:  2 <  x_{1} \leq 4\};   \\
    5,  & \text{if}  & \x \in \hat{S}_{6} = \{\x \in \mathbf{R}^{3}:   4 <  x_{1} \}.   \\
\end{array} \right .
\end{equation} 

In Figure \eqref{fig:Fig5} we shown the behavior of the dynamical system defined by \eqref{eq:PWL}-\eqref{eq:uds_type1}  with the same parameter values as in our previous example but with the switching law \eqref{eq:switching_law_6}. We show the dynamics of the system with the following initial conditions:  $\chi_{01} = (-6, 4, 0)^{\top}$, $\chi_{02} = (-3, 4, 0)^{\top}$, $\chi_{03} = (-1,4, 0)^{\top}$, $\chi_{04} = (1, 4, 0)^{\top}$, $\chi_{05} = (3,4,0)^{\top}$ and $\chi_{06} = (5,4,0)^{\top}$.

On the other hand, in Figure \eqref {fig:Fig6} we show the shapes of six basin of attractions $\Omega_{i}$ (for $i=1,\ldots,6$) generated with the switching law \eqref{eq:switching_law_6} and we vary the initial condition in the range $x_{1} \in [-10,10]$, $x_{2} \in [-10,10]$ and $x_{3}=0$. 

\section{Emerging multistability in a multiscroll attractor based on UDS \textit{Type II}}

Now, the interest is to generate multistability behavior via a dynamical system based on UDS  \textit{Type II}, so we consider the system \eqref{eq:PWL} with \eqref{eq:uds}.

The idea of generalized multistability generation is different to that given in Section \ref{sec_multistabilityUDSI}, instead of controlling the stable and unstable manifolds, it is increased the number of domains in the partion of the phase space. Recall that the spectra $\Lambda=\{\lambda_1,\lambda_2,\lambda_3\}$ of the linear operator $A$ is given as follows: $0<\lambda_1\in\R$, and $\lambda_2,\lambda_3\in\C$ is a pair of complex conjugate with negative real part and corresponding eigenvectors $\bar{\textit{v}_{j}} \in \mathbf{R}^{n}$, $j=1,2,3$.  Our starting point is example 2 where $\alpha=-0.6$, $\beta=6$ and $\gamma=0.6$, and the phase space is partitioned by $S_1=\{\x\in\R^3|x_1\geq-1\}$ and $S_2=\{\x\in\R^3|x_1<-1\}$. Each domain has a stable manifolf $E^{s}_1\subset S_1$ and $E^{s}_2\subset S_2$ given by planes such that they are parallel $E^{s}_1 \parallel E^{s}_2$ . The basin of attraction is located $\Omega$ between  $E^{s}_1$ and $ E^{s}_2$ . So, now the idea of generalized  multistability generation is by incresasing the number of domains in the partition and generate an attractor near the switching surface and between two stable manifolds, i.e.,   $E^{s}_1\subset S_1$, $E^{s}_2\subset S_2$, $\ldots$, $E^{s}_k\subset S_m$, with $2\leq m\in \Z$, and $E^{s}_1 \parallel E^{s}_2,\ldots, E^{s}_{m-1} \parallel E^{s}_m$.

\noindent {\bf Example 5:} 
We exemplify the multistability based on UDS type II by a PWL system which is capable of producing generalized bistability.   The PWL system given in example 2 is used but now the phase space is partitioned in three domains given by  $ S_{1x_1}=\{\x\in\R^3|x_1 >1\}$, $ S_{2x_1}=\{\x\in\R^3|-1\leq x_1\leq 1\}$ and  $ S_{3x_1}=\{\x\in\R^3| x_1 < -1\}$. The switching surfaces are given by $\Sigma_1=\{\x\in\R^3|x_1=1\}$ and $\Sigma_2=\{\x\in\R^3|x_1=-1\}$. This action of introducing a new domain modifies the function $\sigma(\x)$ given by \eqref{eq:switching_law_3_prima} as follows:

\begin{equation}\label{eq:switching_law_3_prima_2}
\sigma(\x) =  \left\{
\begin{array}{lll}
   -7, & {\text if}  &  \x \in  S_{1x_1};  \\
    0,  & {\text if}  & \x \in  S_{2x_1};  \\
    7,  & {\text if}  & \x \in  S_{3x_1};   \\\end{array} \right .
\end{equation} 

\begin{figure}[t]
\centering
\hspace{-25pt}\includegraphics[width=14cm,height=8cm]{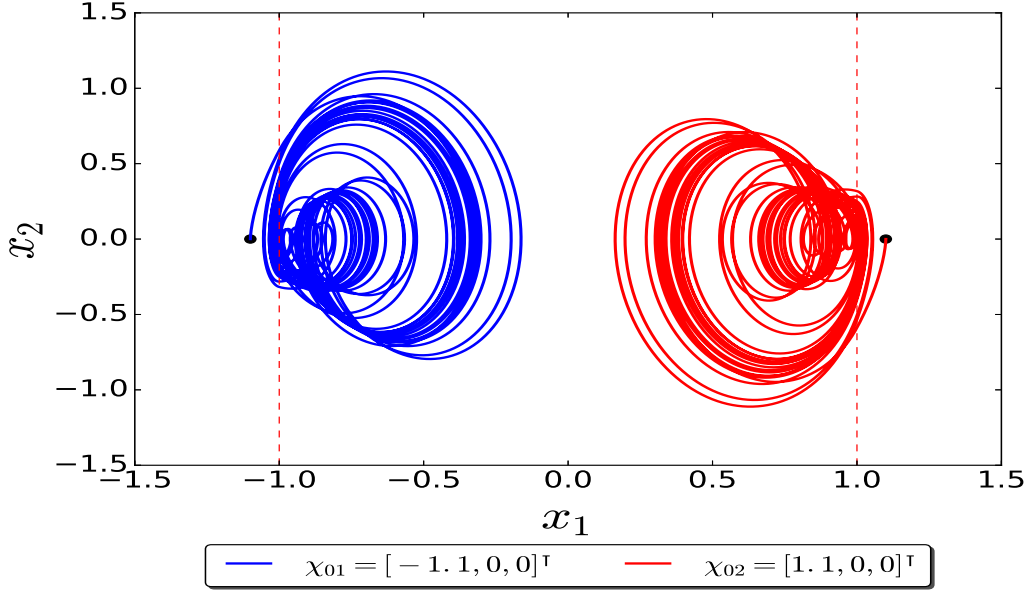}
\caption{Projections of the attractors based on UDS \textit{Type II} onto the ($x_{1},x_{2}$) plane, with the  switching law \eqref{eq:switching_law_3_prima_2} and $\alpha = 0.6$ and $p=10$ . The black dots indicate initial conditions  for the left-hand side attractor $\mathcal{A_L}$ and right-hand side attractor $\mathcal{A_R}$, respectively.} 
\label{fig:Fig7}
\end{figure}

 Now,the equilibria are located at $x_1^*=(0,0,0)$ and $x_{2,3}^*=(\pm 11.66,0,0)$, and this dynamical system presents two attractors $\mathcal{A_{L}}$ and $\mathcal{A_{R}}$, which are shown in Figure \eqref{fig:Fig7}. The left-hand side attractor $\mathcal{A_{L}}$ and right-hand side attractor $\mathcal{A_{R}}$ are generated by considering the following initial conditions: $\chi_{01} = (-1.1, 0, 0)^{\top}$ and $\chi_{02} = (1.1, 0, 0)^{\top}$. In terms of generalized multistability we have a biestable behavior and the basin of attraction of the system is given by the union of two basins of attraction $\Omega_{1} \cup \Omega_{2}$. Figure \eqref{fig:Fig8} shows the basins of attraction $\Omega_{1}$ and $\Omega_{2}$ corresponding to two attractors  $\mathcal{A_{L}}$ and $\mathcal{A_{R}}$, respectively.

The generalization of bistable to multistable behavior given by a dynamical system based on UDS {\it Type II} can be given by adding more domains to the partition by considering swithcing surfaces $\Sigma_i$ perpendicular to the axis $x_1$ based on the aforementioned. But it is not the only way how we can add more domains The other idea  is explained with the next example.
\begin{figure}[t]
\centering
\hspace{-25pt}\includegraphics[width=14cm,height=8cm]{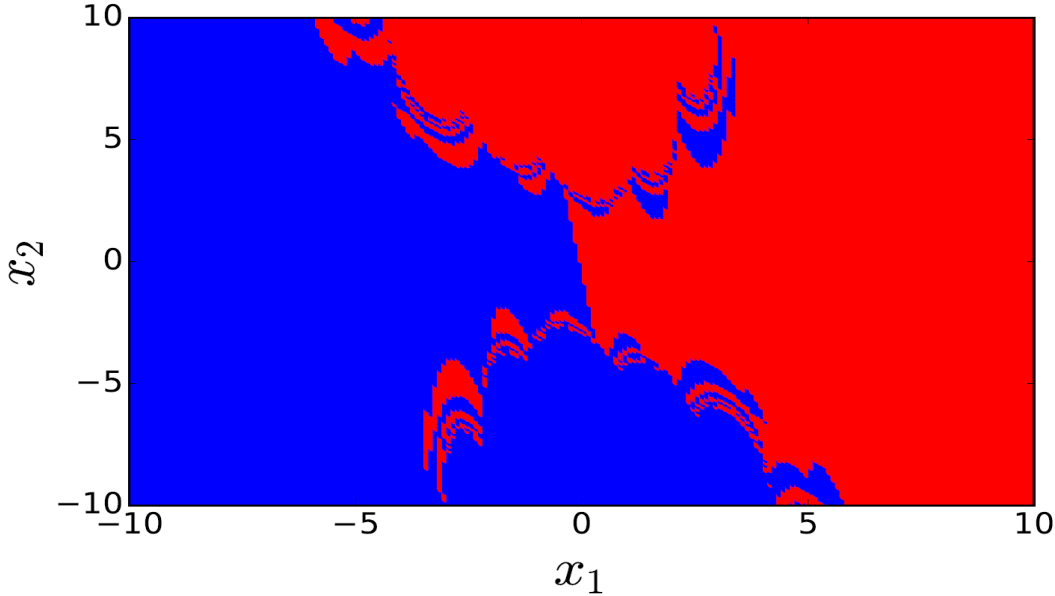}
\caption{The basins of attraction $\Omega_i$, for $i=1,2$, of the system given in example 2 and the switching function
\eqref{eq:switching_law_3_prima_2} in $x_{1} \in [-10,10]$, $x_{2} \in [-10,10]$ and $x_{3}=0$. Blue and red points correspond to initial conditions that converge to the $\mathcal{A_L}$ attractor and the $\mathcal{A_R}$ attractor, respectively.} 
\label{fig:Fig8}
\end{figure}

\begin{figure}[ht]
\centering
\hspace{-25pt}\includegraphics[width=14cm,height=8cm]{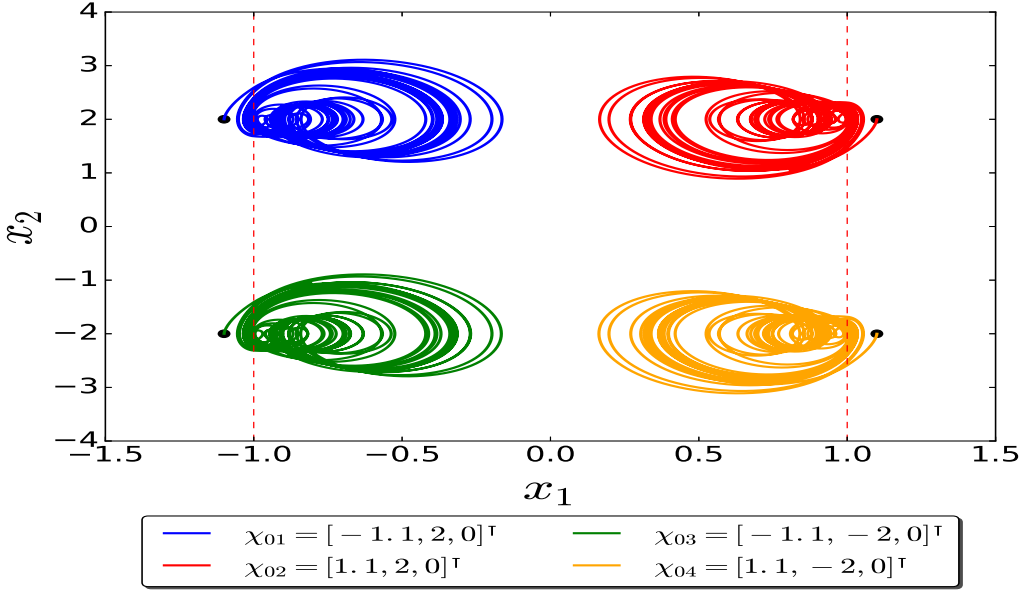}
\caption{Projections of the attractors based on UDS \textit{Type II} onto the ($x_{1},x_{2}$) plane with $\alpha = 0.7$;  $p = 10$ and, vector-valued function $B$ given by \eqref{eq:uds_type2_prima}.} 
\label{fig:Fig9}
\end{figure}

\noindent {\bf Example 6:} 

Now we explain how to generate four  attractors in a two dimensional grid ($2D$-grid scroll attractors) by modifying the piecewise constant vector $B$.  The equilibria of the system is given by $\x^*=(\sigma(\x)/\alpha,0,0)^\top$ by considering \eqref{eq:uds}. Notice that the equilibria are loacted in the axis $x_1$ but now we want the equilibria are located onto the plane $(x_1,x_2)$ as follows $\x^*=(\sigma(\x)/\alpha,f(\x),0)^\top$, thus $B=-A\x$.
  
  \begin{equation}\label{eq:uds_type2_prima}
B(\x) =  \left( \begin{array}{c}
           -f(\x) \\
           0 \\
           \sigma(\x) + \beta f(\x)   
          \end{array} \right), 
\end{equation}

\noindent where  $\sigma(\cdot)$ is the switching law give by \eqref{eq:switching_law_3_prima_2}, and $f(\x)$ is the following step function:
\begin{equation}\label{eq:switching_law_f}
f(\x) =  \left\{
\begin{array}{lll}
   -1.4,  & \text{ if}  & \x \in S_{1x_2} = \{\x \in \mathbf{R}^{3}: x_{2}  \leq 0 \}; \\
    1.4,  & \text{ if}  & \x \in S_{2x_2} = \{\x \in \mathbf{R}^{3}: x_{2}  > 0 \}.  \\
\end{array} \right .
\end{equation} 

The role of the function $f(\cdot)$ is to split the $x_{2}$ direction for each one of the switching surfaces $S_{ix_2}$ and add more domains.  Now the space $\mathbf{R^{3}}$ is partitioned in six domains given as follows: $S_1=S_{1x_1}\cap S_{1x_2}$, $S_2=S_{2x_1}\cap S_{1x_2}$, $S_3=S_{3x_1}\cap S_{1x_2}$, $S_4=S_{1x_1}\cap S_{2x_2}$, $S_5=S_{2x_1}\cap S_{2x_2}$, $S_6=S_{3x_1}\cap S_{2x_2}$. Now there are six equilibria located at: $x_{1,3,4,5}^*=(\pm 11.66,\pm 2,0)$ and $x_{2,5}^*=(0,\pm 2,0)$, with the three equilibrium points added is possible  generate four attractors. Figure \eqref{fig:Fig9} shows multistable behavior for the four coexisting attractors, which are  generated by using the following initial conditions: $\chi_{01} = (-1.1, 2, 0)^{\top}$, $\chi_{02} = (1.1, 2, 0)^{\top}$, $\chi_{03} = (-1.1,- 2, 0)^{\top}$ and $\chi_{04} = (1.1, -2, 0)^{\top}$. Each final stable state of the system is a single chaotic attractor  witch depend only of the initial condition selected. In Figure \eqref{fig:Fig10}, basins of attraction of the system based on UDS \textit{Type II} is shown by varying the initial conditions in the range $x_{1} \in [-10,10]$, $x_{2} \in [-10,10]$ and $x_{3} = 0$. As with UDS \textit{Type I} is possible to extend the number of
final states by adding more equlibria along $x_2$ with switching domains.  

\begin{figure}[ht]
\centering
\hspace{-25pt}\includegraphics[width=14cm,height=8cm]{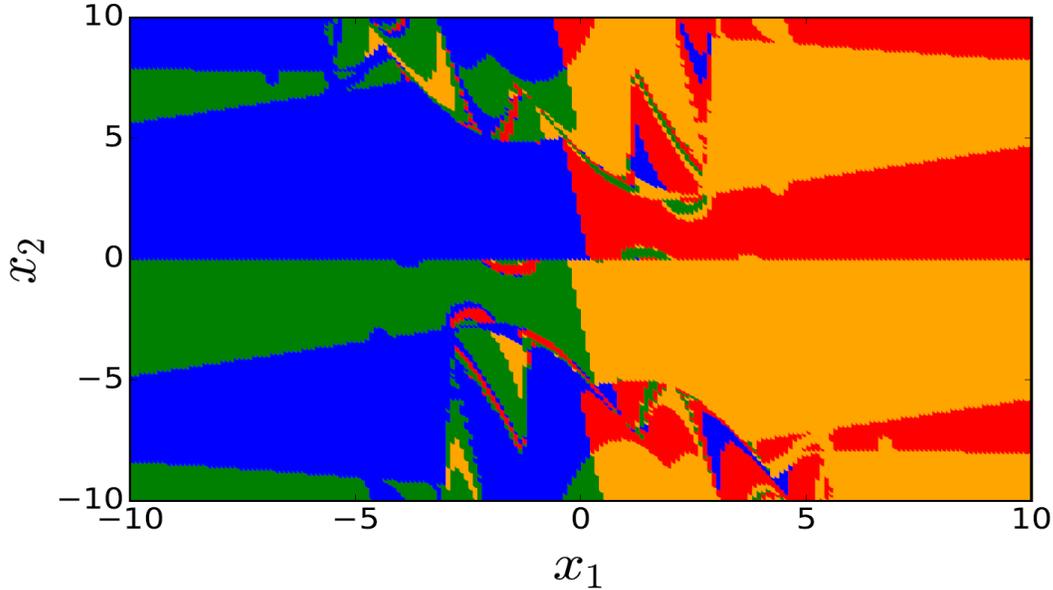}
\caption{Basins of attraction,  of the attractors based on UDS \textit{Type II}, onto the $(x_1,x_2)$ plane, considering $x_{1} \in [-10,10]$, $x_{2} \in [-10,10]$ and $x_{3}=0$. and, vector-valued function $B$ given by \eqref{eq:uds_type2_prima}. Each point represents a given initial condition and its color is the basin of attractions in which the UDS converge with such initial condition. } 
\label{fig:Fig10}
\end{figure}

\section{Concluding remarks}

We have proposed two methodologies to design dynamical systems based on unstable dissipative systems either of \textit{Type I} or \textit{Type II} in such a way that both types of systems generate a multistable behavior. The former methodology consists in introducing a bifurcation parameter in the linear operator of the UDS of \textit{Type I}. With such parameter, we can change the location of the stable and unstable manifold until the trajectory is trapped in a specific attracting set.  In regard to our second methodology, we have considered a UDS \textit{Type II} and more domains have been added to the particion of the phase space. The domains were added by modifing the switching law without changing the linear operator. With this methodology we can design a priory the number of attractors by changing the vector $B$. 

\section*{Acknowledgements}
H. E. Gilardi-Vel\'azquez is a doctoral fellow of the CONACYT in the Graduate Program on control and dynamical systems at DMAp-IPICYT.


\begin{thebibliography}{23}%
\makeatletter
\providecommand \@ifxundefined [1]{%
 \@ifx{#1\undefined}
}%
\providecommand \@ifnum [1]{%
 \ifnum #1\expandafter \@firstoftwo
 \else \expandafter \@secondoftwo
 \fi
}%
\providecommand \@ifx [1]{%
 \ifx #1\expandafter \@firstoftwo
 \else \expandafter \@secondoftwo
 \fi
}%
\providecommand \natexlab [1]{#1}%
\providecommand \enquote  [1]{``#1''}%
\providecommand \bibnamefont  [1]{#1}%
\providecommand \bibfnamefont [1]{#1}%
\providecommand \citenamefont [1]{#1}%
\providecommand \href@noop [0]{\@secondoftwo}%
\providecommand \href [0]{\begingroup \@sanitize@url \@href}%
\providecommand \@href[1]{\@@startlink{#1}\@@href}%
\providecommand \@@href[1]{\endgroup#1\@@endlink}%
\providecommand \@sanitize@url [0]{\catcode `\\12\catcode `\$12\catcode
  `\&12\catcode `\#12\catcode `\^12\catcode `\_12\catcode `\%12\relax}%
\providecommand \@@startlink[1]{}%
\providecommand \@@endlink[0]{}%
\providecommand \url  [0]{\begingroup\@sanitize@url \@url }%
\providecommand \@url [1]{\endgroup\@href {#1}{\urlprefix }}%
\providecommand \urlprefix  [0]{URL }%
\providecommand \Eprint [0]{\href }%
\providecommand \doibase [0]{http://dx.doi.org/}%
\providecommand \selectlanguage [0]{\@gobble}%
\providecommand \bibinfo  [0]{\@secondoftwo}%
\providecommand \bibfield  [0]{\@secondoftwo}%
\providecommand \translation [1]{[#1]}%
\providecommand \BibitemOpen [0]{}%
\providecommand \bibitemStop [0]{}%
\providecommand \bibitemNoStop [0]{.\EOS\space}%
\providecommand \EOS [0]{\spacefactor3000\relax}%
\providecommand \BibitemShut  [1]{\csname bibitem#1\endcsname}%
\let\auto@bib@innerbib\@empty
\bibitem [{\citenamefont {Campos-Cant{\'{o}}n}\ \emph
  {et~al.}(2010)\citenamefont {Campos-Cant{\'{o}}n}, \citenamefont
  {Barajas-Ram{\'{i}}rez}, \citenamefont {Sol{\'{i}}s-Perales},\ and\
  \citenamefont {Femat}}]{Campos-Canton2010a}%
  \BibitemOpen
  \bibfield  {author} {\bibinfo {author} {\bibfnamefont {E.}~\bibnamefont
  {Campos-Cant{\'{o}}n}}, \bibinfo {author} {\bibfnamefont {J.~G.}\
  \bibnamefont {Barajas-Ram{\'{i}}rez}}, \bibinfo {author} {\bibfnamefont
  {G.}~\bibnamefont {Sol{\'{i}}s-Perales}}, \ and\ \bibinfo {author}
  {\bibfnamefont {R.}~\bibnamefont {Femat}},\ }\bibfield  {title} {\enquote
  {\bibinfo {title} {{Multiscroll attractors by switching systems}},}\ }\href
  {\doibase 10.1063/1.3314278} {\bibfield  {journal} {\bibinfo  {journal}
  {Chaos}\ }\textbf {\bibinfo {volume} {20}} (\bibinfo {year} {2010}),\
  10.1063/1.3314278}\BibitemShut {NoStop}%
\bibitem [{\citenamefont {Angeli}(2007)}]{Angeli2007}%
  \BibitemOpen
  \bibfield  {author} {\bibinfo {author} {\bibfnamefont {D.}~\bibnamefont
  {Angeli}},\ }\bibfield  {title} {\enquote {\bibinfo {title} {{Multistability
  in systems with counter-clockwise input-output dynamics}},}\ }\href {\doibase
  10.1109/TAC.2007.894507} {\bibfield  {journal} {\bibinfo  {journal} {IEEE
  Transactions on Automatic Control}\ }\textbf {\bibinfo {volume} {52}},\
  \bibinfo {pages} {596--609} (\bibinfo {year} {2007})}\BibitemShut {NoStop}%
\bibitem [{\citenamefont {Ghaffarizadeh}, \citenamefont {Flann},\ and\
  \citenamefont {Podgorski}(2014)}]{Ghaffarizadeh2014}%
  \BibitemOpen
  \bibfield  {author} {\bibinfo {author} {\bibfnamefont {A.}~\bibnamefont
  {Ghaffarizadeh}}, \bibinfo {author} {\bibfnamefont {N.~S.}\ \bibnamefont
  {Flann}}, \ and\ \bibinfo {author} {\bibfnamefont {G.~J.}\ \bibnamefont
  {Podgorski}},\ }\bibfield  {title} {\enquote {\bibinfo {title} {{Multistable
  switches and their role in cellular differentiation networks.}}}\ }\href
  {\doibase 10.1186/1471-2105-15-S7-S7} {\bibfield  {journal} {\bibinfo
  {journal} {BMC bioinformatics}\ }\textbf {\bibinfo {volume} {15 Suppl 7}},\
  \bibinfo {pages} {S7} (\bibinfo {year} {2014})}\BibitemShut {NoStop}%
\bibitem [{\citenamefont {Sagu{\'{e}}s}\ and\ \citenamefont
  {Epstein}(2003)}]{Sagues2003}%
  \BibitemOpen
  \bibfield  {author} {\bibinfo {author} {\bibfnamefont {F.}~\bibnamefont
  {Sagu{\'{e}}s}}\ and\ \bibinfo {author} {\bibfnamefont {I.~R.}\ \bibnamefont
  {Epstein}},\ }\bibfield  {title} {\enquote {\bibinfo {title} {{Nonlinear
  chemical dynamics}},}\ }\href {\doibase 10.1039/b210932h} {\bibfield
  {journal} {\bibinfo  {journal} {Dalton Transactions}\ ,\ \bibinfo {pages}
  {1201--1217}} (\bibinfo {year} {2003})}\BibitemShut {NoStop}%
\bibitem [{\citenamefont {Haddad}, \citenamefont {Hui},\ and\ \citenamefont
  {Bailey}(2011)}]{Haddad2011}%
  \BibitemOpen
  \bibfield  {author} {\bibinfo {author} {\bibfnamefont {W.~M.}\ \bibnamefont
  {Haddad}}, \bibinfo {author} {\bibfnamefont {Q.}~\bibnamefont {Hui}}, \ and\
  \bibinfo {author} {\bibfnamefont {J.~M.}\ \bibnamefont {Bailey}},\ }\bibfield
   {title} {\enquote {\bibinfo {title} {{Multistability, bifurcations, and
  biological neural networks: A synaptic drive firing model for cerebral cortex
  transition in the induction of general anesthesia}},}\ }in\ \href {\doibase
  10.1109/CDC.2011.6160350} {\emph {\bibinfo {booktitle} {Proceedings of the
  IEEE Conference on Decision and Control}}}\ (\bibinfo {year} {2011})\ pp.\
  \bibinfo {pages} {3901--3908}\BibitemShut {NoStop}%
\bibitem [{\citenamefont {Patel}\ \emph {et~al.}(2014)\citenamefont {Patel},
  \citenamefont {Patel}, \citenamefont {Sen}, \citenamefont {Sethia},
  \citenamefont {Hens}, \citenamefont {Dana}, \citenamefont {Feudel},
  \citenamefont {Showalter}, \citenamefont {Ngonghala},\ and\ \citenamefont
  {Amritkar}}]{Patel2014}%
  \BibitemOpen
  \bibfield  {author} {\bibinfo {author} {\bibfnamefont {M.~S.}\ \bibnamefont
  {Patel}}, \bibinfo {author} {\bibfnamefont {U.}~\bibnamefont {Patel}},
  \bibinfo {author} {\bibfnamefont {A.}~\bibnamefont {Sen}}, \bibinfo {author}
  {\bibfnamefont {G.~C.}\ \bibnamefont {Sethia}}, \bibinfo {author}
  {\bibfnamefont {C.}~\bibnamefont {Hens}}, \bibinfo {author} {\bibfnamefont
  {S.~K.}\ \bibnamefont {Dana}}, \bibinfo {author} {\bibfnamefont
  {U.}~\bibnamefont {Feudel}}, \bibinfo {author} {\bibfnamefont
  {K.}~\bibnamefont {Showalter}}, \bibinfo {author} {\bibfnamefont {C.~N.}\
  \bibnamefont {Ngonghala}}, \ and\ \bibinfo {author} {\bibfnamefont {R.~E.}\
  \bibnamefont {Amritkar}},\ }\bibfield  {title} {\enquote {\bibinfo {title}
  {{Experimental observation of extreme multistability in an electronic system
  of two coupled R{\"{o}}ssler oscillators}},}\ }\href {\doibase
  10.1103/PhysRevE.89.022918} {\bibfield  {journal} {\bibinfo  {journal}
  {Physical Review E - Statistical, Nonlinear, and Soft Matter Physics}\
  }\textbf {\bibinfo {volume} {89}} (\bibinfo {year} {2014}),\
  10.1103/PhysRevE.89.022918}\BibitemShut {NoStop}%
\bibitem [{\citenamefont {Gershman}, \citenamefont {Vul},\ and\ \citenamefont
  {Tenenbaum}(2012)}]{Gershman2012}%
  \BibitemOpen
  \bibfield  {author} {\bibinfo {author} {\bibfnamefont {S.~J.}\ \bibnamefont
  {Gershman}}, \bibinfo {author} {\bibfnamefont {E.}~\bibnamefont {Vul}}, \
  and\ \bibinfo {author} {\bibfnamefont {J.~B.}\ \bibnamefont {Tenenbaum}},\
  }\bibfield  {title} {\enquote {\bibinfo {title} {{Multistability and
  Perceptual Inference}},}\ }\href {\doibase 10.1162/NECO_a_00226} {\bibfield
  {journal} {\bibinfo  {journal} {Neural Computation}\ }\textbf {\bibinfo
  {volume} {24}},\ \bibinfo {pages} {1--24} (\bibinfo {year}
  {2012})}\BibitemShut {NoStop}%
\bibitem [{\citenamefont {Jung}\ \emph {et~al.}(2014)\citenamefont {Jung},
  \citenamefont {Butz}, \citenamefont {Marthaler}, \citenamefont {Fistul},
  \citenamefont {Lepp{\"{a}}kangas}, \citenamefont {Koshelets},\ and\
  \citenamefont {Ustinov}}]{Jung2014}%
  \BibitemOpen
  \bibfield  {author} {\bibinfo {author} {\bibfnamefont {P.}~\bibnamefont
  {Jung}}, \bibinfo {author} {\bibfnamefont {S.}~\bibnamefont {Butz}}, \bibinfo
  {author} {\bibfnamefont {M.}~\bibnamefont {Marthaler}}, \bibinfo {author}
  {\bibfnamefont {M.~V.}\ \bibnamefont {Fistul}}, \bibinfo {author}
  {\bibfnamefont {J.}~\bibnamefont {Lepp{\"{a}}kangas}}, \bibinfo {author}
  {\bibfnamefont {V.~P.}\ \bibnamefont {Koshelets}}, \ and\ \bibinfo {author}
  {\bibfnamefont {a.~V.}\ \bibnamefont {Ustinov}},\ }\bibfield  {title}
  {\enquote {\bibinfo {title} {{Multistability and switching in a
  superconducting metamaterial.}}}\ }\href {\doibase 10.1038/ncomms4730}
  {\bibfield  {journal} {\bibinfo  {journal} {Nature communications}\ }\textbf
  {\bibinfo {volume} {5}},\ \bibinfo {pages} {3730} (\bibinfo {year} {2014})},\
  \Eprint {http://arxiv.org/abs/1312.2937} {arXiv:1312.2937} \BibitemShut
  {NoStop}%
\bibitem [{\citenamefont {Mendes}(2000)}]{Mendes2000}%
  \BibitemOpen
  \bibfield  {author} {\bibinfo {author} {\bibfnamefont {R.~V.}\ \bibnamefont
  {Mendes}},\ }\bibfield  {title} {\enquote {\bibinfo {title} {{Multistability
  in dynamical systems}},}\ }in\ \href@noop {} {\emph {\bibinfo {booktitle}
  {Dynamical Systems: From Crystal to Chaos. Proceedings of the conference in
  honor of Gerar Rauzy on his 60th birthday}}},\ \bibinfo {editor} {edited by\
  \bibinfo {editor} {\bibfnamefont {J.}~\bibnamefont {Gambaudo}}, \bibinfo
  {editor} {\bibfnamefont {P.}~\bibnamefont {Hubert}}, \bibinfo {editor}
  {\bibfnamefont {P.}~\bibnamefont {Tisseur}}, \ and\ \bibinfo {editor}
  {\bibfnamefont {S.}~\bibnamefont {Vaienti}}}\ (\bibinfo  {publisher} {World
  Scientific},\ \bibinfo {address} {Singapur},\ \bibinfo {year} {2000})\
  \bibinfo {edition} {1st}\ ed.,\ p.\ \bibinfo {pages} {105.113}\BibitemShut
  {NoStop}%
\bibitem [{\citenamefont {Kengne}(2017)}]{Kengne2017}%
  \BibitemOpen
  \bibfield  {author} {\bibinfo {author} {\bibfnamefont {J.}~\bibnamefont
  {Kengne}},\ }\bibfield  {title} {\enquote {\bibinfo {title} {{On the Dynamics
  of Chua's oscillator with a smooth cubic nonlinearity: occurrence of multiple
  attractors}},}\ }\href@noop {} {\bibfield  {journal} {\bibinfo  {journal}
  {Nonlinear Dynamics}\ }\textbf {\bibinfo {volume} {87}},\ \bibinfo {pages}
  {363--375} (\bibinfo {year} {2017})}\BibitemShut {NoStop}%
\bibitem [{\citenamefont {Giesl}(2007)}]{Giesl2007}%
  \BibitemOpen
  \bibfield  {author} {\bibinfo {author} {\bibfnamefont {P.}~\bibnamefont
  {Giesl}},\ }\bibfield  {title} {\enquote {\bibinfo {title} {{On the
  determination of the basin of attraction of discrete dynamical systems}},}\
  }\href {\doibase 10.1080/10236190601135209} {\bibfield  {journal} {\bibinfo
  {journal} {Journal of Difference Equations and Applications}\ }\textbf
  {\bibinfo {volume} {13}},\ \bibinfo {pages} {523--546} (\bibinfo {year}
  {2007})}\BibitemShut {NoStop}%
\bibitem [{\citenamefont {Hui}(2014)}]{Hui2014}%
  \BibitemOpen
  \bibfield  {author} {\bibinfo {author} {\bibfnamefont {Q.}~\bibnamefont
  {Hui}},\ }\bibfield  {title} {\enquote {\bibinfo {title} {{Multistability
  Analysis of Discontinuous Dynamical Systems via Finite Trajectory Length}},}\
  }in\ \href {\doibase 10.1109/WAC.2014.6935993} {\emph {\bibinfo {booktitle}
  {World Automation Congress (WAC), 2014}}}\ (\bibinfo  {publisher} {IEEE},\
  \bibinfo {year} {2014})\BibitemShut {NoStop}%
\bibitem [{\citenamefont {Jim{\'{e}}nez-L{\'{o}}pez}\ \emph
  {et~al.}(2013)\citenamefont {Jim{\'{e}}nez-L{\'{o}}pez}, \citenamefont
  {{Gonz{\'{a}}lez Salas}}, \citenamefont {Onta{\~{n}}{\'{o}}n-Garc{\'{i}}a},
  \citenamefont {Campos-Cant{\'{o}}n},\ and\ \citenamefont
  {Pisarchik}}]{Jimenez2013}%
  \BibitemOpen
  \bibfield  {author} {\bibinfo {author} {\bibfnamefont {E.}~\bibnamefont
  {Jim{\'{e}}nez-L{\'{o}}pez}}, \bibinfo {author} {\bibfnamefont {J.~S.}\
  \bibnamefont {{Gonz{\'{a}}lez Salas}}}, \bibinfo {author} {\bibfnamefont
  {L.~J.}\ \bibnamefont {Onta{\~{n}}{\'{o}}n-Garc{\'{i}}a}}, \bibinfo {author}
  {\bibfnamefont {E.}~\bibnamefont {Campos-Cant{\'{o}}n}}, \ and\ \bibinfo
  {author} {\bibfnamefont {A.~N.}\ \bibnamefont {Pisarchik}},\ }\bibfield
  {title} {\enquote {\bibinfo {title} {{Generalized multistable structure via
  chaotic synchronization and preservation of scrolls}},}\ }\href {\doibase
  10.1016/j.jfranklin.2013.06.025} {\bibfield  {journal} {\bibinfo  {journal}
  {Journal of the Franklin Institute}\ }\textbf {\bibinfo {volume} {350}},\
  \bibinfo {pages} {2853--2866} (\bibinfo {year} {2013})}\BibitemShut {NoStop}%
\bibitem [{\citenamefont {{C. R. Hens, R. Banerjee, U. Feudel}}\ and\
  \citenamefont {Dana}(2012)}]{Hens2012}%
  \BibitemOpen
  \bibfield  {author} {\bibinfo {author} {\bibnamefont {{C. R. Hens, R.
  Banerjee, U. Feudel}}}\ and\ \bibinfo {author} {\bibfnamefont {S.~K.}\
  \bibnamefont {Dana}},\ }\bibfield  {title} {\enquote {\bibinfo {title} {{How
  to obtain extreme multistability in coupled dynamical systems}},}\
  }\href@noop {} {\bibfield  {journal} {\bibinfo  {journal} {Phys. Rev. E}\
  }\textbf {\bibinfo {volume} {85}},\ \bibinfo {pages} {035202} (\bibinfo
  {year} {2012})}\BibitemShut {NoStop}%
\bibitem [{\citenamefont {Kengne}, \citenamefont {{Njitacke Tabekoueng}},\ and\
  \citenamefont {Fotsin}(2016)}]{Kengne2016}%
  \BibitemOpen
  \bibfield  {author} {\bibinfo {author} {\bibfnamefont {J.}~\bibnamefont
  {Kengne}}, \bibinfo {author} {\bibfnamefont {Z.}~\bibnamefont {{Njitacke
  Tabekoueng}}}, \ and\ \bibinfo {author} {\bibfnamefont {H.~B.}\ \bibnamefont
  {Fotsin}},\ }\bibfield  {title} {\enquote {\bibinfo {title} {{Coexistence of
  multiple attractors and crisis route to chaos in autonomous third order
  Duffing-Holmes type chaotic oscillators}},}\ }\href {\doibase
  10.1016/j.cnsns.2015.11.009} {\bibfield  {journal} {\bibinfo  {journal}
  {Communications in Nonlinear Science and Numerical Simulation}\ }\textbf
  {\bibinfo {volume} {36}},\ \bibinfo {pages} {29--44} (\bibinfo {year}
  {2016})}\BibitemShut {NoStop}%
\bibitem [{\citenamefont {{Gilardi-Vel{\'{a}}zquez,
  H.E.;Onta{\~{n}}{\'{o}}n-Garc{\'{i}}a, L.J; Hurtado-Rodriguiez; D.G.;
  Campos-Cant{\'{o}}n}}(2017)}]{Gilardi-velazquez2017}%
  \BibitemOpen
  \bibfield  {author} {\bibinfo {author} {\bibfnamefont {E.}~\bibnamefont
  {{Gilardi-Vel{\'{a}}zquez, H.E.;Onta{\~{n}}{\'{o}}n-Garc{\'{i}}a, L.J;
  Hurtado-Rodriguiez; D.G.; Campos-Cant{\'{o}}n}}},\ }\bibfield  {title}
  {\enquote {\bibinfo {title} {{Multistability in piecewise linear systems
  versus eigenspectra variation and round function}},}\ }\href@noop {}
  {\bibfield  {journal} {\bibinfo  {journal} {International Journal of
  Bifurcation and Chaos,}\ }\textbf {\bibinfo {volume} {27}} (\bibinfo {year}
  {2017})}\BibitemShut {NoStop}%
\bibitem [{\citenamefont {Li}\ and\ \citenamefont {Sprott}(2013)}]{Li2013}%
  \BibitemOpen
  \bibfield  {author} {\bibinfo {author} {\bibfnamefont {C.}~\bibnamefont
  {Li}}\ and\ \bibinfo {author} {\bibfnamefont {J.~C.}\ \bibnamefont
  {Sprott}},\ }\bibfield  {title} {\enquote {\bibinfo {title} {{Multistability
  in a Butterfly Flow}},}\ }\href {\doibase 10.1142/S021812741350199X}
  {\bibfield  {journal} {\bibinfo  {journal} {International Journal of
  Bifurcation and Chaos}\ }\textbf {\bibinfo {volume} {23}},\ \bibinfo {pages}
  {1350199} (\bibinfo {year} {2013})}\BibitemShut {NoStop}%
\bibitem [{\citenamefont {{Z.T. Njitacke, J. kengne, H.B. Fotsin, A. Nguomkam
  Negou}}(2016)}]{Njitacke2016}%
  \BibitemOpen
  \bibfield  {author} {\bibinfo {author} {\bibfnamefont {D.~T.}\ \bibnamefont
  {{Z.T. Njitacke, J. kengne, H.B. Fotsin, A. Nguomkam Negou}}},\ }\bibfield
  {title} {\enquote {\bibinfo {title} {{Coexistence of multiple attractors and
  crisis route to chaos in a novel memristive diode bidge-based Jerk
  circuit}},}\ }\href@noop {} {\bibfield  {journal} {\bibinfo  {journal}
  {Chaos, Solitons and Fractals: the interdisciplinary journal of Nonlinear
  Science, and Nonequilibrium and Complex Phenomena}\ }\textbf {\bibinfo
  {volume} {91}},\ \bibinfo {pages} {180--197} (\bibinfo {year}
  {2016})}\BibitemShut {NoStop}%
\bibitem [{\citenamefont {Carvalho}, \citenamefont {Fernandez},\ and\
  \citenamefont {{Vilela Mendes}}(2001)}]{Carvalho2001a}%
  \BibitemOpen
  \bibfield  {author} {\bibinfo {author} {\bibfnamefont {R.}~\bibnamefont
  {Carvalho}}, \bibinfo {author} {\bibfnamefont {B.}~\bibnamefont {Fernandez}},
  \ and\ \bibinfo {author} {\bibfnamefont {R.}~\bibnamefont {{Vilela
  Mendes}}},\ }\bibfield  {title} {\enquote {\bibinfo {title} {{From
  synchronization to multistability in two coupled quadratic maps}},}\ }\href
  {\doibase 10.1016/S0375-9601(01)00370-X} {\bibfield  {journal} {\bibinfo
  {journal} {Physics Letters, Section A: General, Atomic and Solid State
  Physics}\ }\textbf {\bibinfo {volume} {285}},\ \bibinfo {pages} {327--338}
  (\bibinfo {year} {2001})},\ \Eprint {http://arxiv.org/abs/0005053}
  {arXiv:0005053 [nlin.CD]} \BibitemShut {NoStop}%
\bibitem [{\citenamefont {{Astakhov V, Shabunin A, Uhm
  W}}(2001)}]{Astakhov2001}%
  \BibitemOpen
  \bibfield  {author} {\bibinfo {author} {\bibfnamefont {K.~S.}\ \bibnamefont
  {{Astakhov V, Shabunin A, Uhm W}}},\ }\bibfield  {title} {\enquote {\bibinfo
  {title} {{Multistability formation and synchronization loss in coupled
  H{\'{e}}non maps: two sides of the single bifurcational mechanism.}}}\
  }\href@noop {} {\bibfield  {journal} {\bibinfo  {journal} {Physical Review E
  - Statistical, Nonlinear, and Soft Matter Physics}\ }\textbf {\bibinfo
  {volume} {63}},\ \bibinfo {pages} {056212} (\bibinfo {year}
  {2001})}\BibitemShut {NoStop}%
\bibitem [{\citenamefont {Guzzo}(2010)}]{Guzzo2010}%
  \BibitemOpen
  \bibfield  {author} {\bibinfo {author} {\bibfnamefont {M.}~\bibnamefont
  {Guzzo}},\ }\bibfield  {title} {\enquote {\bibinfo {title} {{Chaos and
  Diffusion in Dynamical Systems Through Stable-Unstable Manifolds}},}\ }in\
  \href@noop {} {\emph {\bibinfo {booktitle} {Space Manifold Dynamics}}}\
  (\bibinfo {year} {2010})\ Chap.~\bibinfo {chapter} {2}, pp.\ \bibinfo {pages}
  {97--112}\BibitemShut {NoStop}%
\bibitem [{\citenamefont {Campos-Cant{\'{o}}n}, \citenamefont {Femat},\ and\
  \citenamefont {Chen}(2012)}]{Campos-Canton2012a}%
  \BibitemOpen
  \bibfield  {author} {\bibinfo {author} {\bibfnamefont {E.}~\bibnamefont
  {Campos-Cant{\'{o}}n}}, \bibinfo {author} {\bibfnamefont {R.}~\bibnamefont
  {Femat}}, \ and\ \bibinfo {author} {\bibfnamefont {G.}~\bibnamefont {Chen}},\
  }\bibfield  {title} {\enquote {\bibinfo {title} {{Attractors generated from
  switching unstable dissipative systems}},}\ }\href {\doibase
  10.1063/1.4742338} {\bibfield  {journal} {\bibinfo  {journal} {Chaos}\
  }\textbf {\bibinfo {volume} {22}} (\bibinfo {year} {2012}),\
  10.1063/1.4742338}\BibitemShut {NoStop}%
\bibitem [{\citenamefont {Onta{\~{n}}{\'{o}}n-Garc{\'{i}}a}\ \emph
  {et~al.}(2014)\citenamefont {Onta{\~{n}}{\'{o}}n-Garc{\'{i}}a}, \citenamefont
  {Jim{\'{e}}nez-L{\'{o}}pez}, \citenamefont {Campos-Cant{\'{o}}n},\ and\
  \citenamefont {Basin}}]{Ontanon-Garcia2014}%
  \BibitemOpen
  \bibfield  {author} {\bibinfo {author} {\bibfnamefont {L.~J.}\ \bibnamefont
  {Onta{\~{n}}{\'{o}}n-Garc{\'{i}}a}}, \bibinfo {author} {\bibfnamefont
  {E.}~\bibnamefont {Jim{\'{e}}nez-L{\'{o}}pez}}, \bibinfo {author}
  {\bibfnamefont {E.}~\bibnamefont {Campos-Cant{\'{o}}n}}, \ and\ \bibinfo
  {author} {\bibfnamefont {M.}~\bibnamefont {Basin}},\ }\bibfield  {title}
  {\enquote {\bibinfo {title} {{A family of hyperchaotic multi-scroll
  attractors in R{\^{}}n}},}\ }\href {\doibase 10.1016/j.amc.2014.01.134}
  {\bibfield  {journal} {\bibinfo  {journal} {Applied Mathematics and
  Computation}\ }\textbf {\bibinfo {volume} {233}},\ \bibinfo {pages}
  {522--533} (\bibinfo {year} {2014})}\BibitemShut {NoStop}%
\end{thebibliography}
%

\end{document}